\newif\ifextended
\extendedtrue

\documentclass[sigconf, nonacm]{acmart}
\usepackage{amsthm}
\usepackage{booktabs}
\usepackage{algorithm}
\usepackage{algpseudocode}
\usepackage{tikz}
\usetikzlibrary{arrows.meta, positioning, fit, backgrounds}
\usepackage{enumitem}
\usepackage{xspace}
\usepackage{graphicx}
\usepackage{subcaption}
\usepackage{pifont}
\usepackage{dblfloatfix}

\providecommand{\Fld}{\mathbb{F}}
\providecommand{\EF}{\mathbb{F}_{\mathrm{EF}}}
\providecommand{\EFld}{\mathbb{F}_{\mathrm{EF}}}
\providecommand{\Bool}{\{0,1\}}
\providecommand{\goldilocks}{2^{64}-2^{32}+1}
\providecommand{\supp}{\operatorname{supp}}

\providecommand{\Ntrain}{N}
\providecommand{\Ntest}{T}
\providecommand{\Ntables}{L}
\providecommand{\Kdepth}{K}
\providecommand{\knn}{k}
\providecommand{\Nclass}{C}
\providecommand{\Nbuckets}{M}

\providecommand{\draw}{d_{\mathrm{raw}}}
\providecommand{\dpca}{d_{\mathrm{pca}}}

\providecommand{\Vpca}{V_{\mathrm{pca}}}
\providecommand{\xraw}{x_{\mathrm{raw}}}

\providecommand{\harm}[1]{H_{#1}}
\providecommand{\phipos}{\phi^{+}}
\providecommand{\phineg}{\phi^{-}}

\providecommand{\phifinal}{\phi^{\mathrm{final}}}
\providecommand{\svsum}{\mathsf{sv\_sum}}
\providecommand{\svavg}{\mathsf{sv\_avg}}

\providecommand{\member}{\mathsf{member}}

\providecommand{\dotprod}{\mathsf{dp}}
\providecommand{\signbit}{\mathsf{sign}}

\providecommand{\hashkey}{\kappa}
\providecommand{\cnt}{\mathsf{cnt}}

\providecommand{\cnttr}{\mathsf{cnt\_c\_tr}}
\providecommand{\cntte}{\mathsf{cnt\_c\_te}}
\providecommand{\mhat}{\widehat{m}}
\providecommand{\mchat}{\widehat{m}_c}
\providecommand{\tchat}{\widehat{t}_c}

\providecommand{\MLE}[1]{\widetilde{#1}}
\providecommand{\MLExtrain}{\MLE{x_{\mathrm{tr}}}}
\providecommand{\MLExtest}{\MLE{x_{\mathrm{te}}}}
\providecommand{\eq}{\mathrm{eq}}
\providecommand{\cm}[1]{\mathsf{cm}(#1)}
\providecommand{\Comm}[1]{\mathsf{cm}(#1)}

\providecommand{\PCS}{\mathsf{PCS}}

\providecommand{\Setup}{\mathsf{Setup}}
\providecommand{\Verify}{\mathsf{Verify}}
\providecommand{\pp}{\mathsf{pp}}

\providecommand{\provider}{V}
\providecommand{\marketplace}{\mathcal{M}}

\providecommand{\commitment}[1]{\mathsf{C}_{#1}}
\providecommand{\indexset}[1]{S_{#1}}
\providecommand{\subphi}[1]{\Phi_{#1}}
\providecommand{\subD}[1]{D_{#1}}

\providecommand{\ptlow}{\tau_{\mathrm{low}}}

\providecommand{\parh}[1]{\noindent\textbf{#1}}

\newif\ifshowcomments
\showcommentstrue

\ifshowcomments
\newcommand{\sw}[1]{\mytodoblue{[wang: #1]}}
\else
\newcommand{\sw}[1]{}
\fi

\ifshowcomments
\newcommand{\pingchuan}[1]{\mytodored{[pingchuan: #1]}}
\else
\newcommand{\pingchuan}[1]{}
\fi

\ifshowcomments
\newcommand{\zhaoyu}[1]{\mytodoorange{[zhaoyu: #1]}}
\else
\newcommand{\zhaoyu}[1]{}
\fi

\ifshowcomments
\newcommand{\zy}[1]{\mytodoorange{[zhaoyu: #1]}}
\else
\newcommand{\zy}[1]{}
\fi

\ifshowcomments
\newcommand{\zhaoyuhigh}[1]{\mytodored{[zhaoyu\_high: #1]}}
\else
\newcommand{\zhaoyuhigh}[1]{}
\fi

\ifshowcomments
\newcommand{\zhaoyumedium}[1]{\mytodoorange{[zhaoyu\_medium: #1]}}
\else
\newcommand{\zhaoyumedium}[1]{}
\fi

\ifshowcomments
\newcommand{\zhaoyulow}[1]{\mytodocyan{[zhaoyu\_low: #1]}}
\else
\newcommand{\zhaoyulow}[1]{}
\fi

\newcommand{\mytodoblue}[1]{\textcolor{blue}{\ding{46}~{\sf}~#1}}
\newcommand{\mytodored}[1]{\textcolor{red}{\ding{46}~{\sf}~#1}}

\newcommand{\mytodoorange}[1]{\textcolor{orange}{\ding{46}~{\sf}~#1}}
\newcommand{\mytodocyan}[1]{\textcolor{cyan}{\ding{46}~{\sf}~#1}}

\newcommand{\F}{Fig.}

\newcommand{\T}{Table}

\newcommand{\Thm}{Theorem}

\newcommand{\appref}[1]{Appendix~\ref{#1}}
\newcommand{\Appref}[1]{\appref{#1}}

\newcommand{\rqThreeFullProofMB}{579}
\newcommand{\rqThreeFullProveS}{11.9}

\newcommand{\rqThreeFullVerifyS}{0.73}
\newcommand{\rqThreeNoSparsityProofMB}{579}
\newcommand{\rqThreeNoSparsityProveS}{13.4}

\newcommand{\rqThreeNoSparsityVerifyS}{0.86}
\newcommand{\rqThreeSparsityReductionPct}{12.6}
\newcommand{\rqThreeNoSuperOracleProofMB}{1247}
\newcommand{\rqThreeNoSuperOracleProveS}{14.7}
\newcommand{\rqThreeNoSuperOracleVerifyS}{1.06}
\newcommand{\rqThreeSuperOracleProofRatio}{2.2}
\newcommand{\rqThreeOriginalProofMB}{635}
\newcommand{\rqThreeOriginalProveS}{129.5}
\newcommand{\rqThreeOriginalVerifyS}{6.91}
\newcommand{\rqThreeBucketSpeedup}{11}

\newcommand{\rqThreeBreakdownLayersPct}{14.5}
\newcommand{\rqTwoDLowProveS}{20.0}
\newcommand{\rqTwoDHighProveS}{72.9}
\newcommand{\rqTwoNTopN}{65,536}
\newcommand{\rqTwoNTopProveS}{79.8}
\newcommand{\rqTwoCTopC}{100}
\newcommand{\rqTwoCTopProofMB}{1757}
\newcommand{\rqTwoCTopProveS}{32.2}
\newcommand{\rqTwoNtestSpreadPct}{6.6}

\newcommand{\rqOneBestSpeedup}{20.3}
\newcommand{\rqOneBestSpeedupDs}{cpu}

\newcommand{\rqOneKnnOomCount}{5}
\newcommand{\rqOneKnnOomTotal}{12}

\newcommand{\rqOneOrigOomCount}{3}
\newcommand{\rqOneOrigPeakSpeedup}{68.1}

\newcommand{\rqOneOrigMinSpeedup}{12.6}

\newcommand{\rqOneOrigOomList}{\texttt{vehicle}, \texttt{mnist}, \texttt{cifar10}}

\newcommand{\rqOneAbsDeltaAurocMax}{0.033}
\newcommand{\rqOneOpenMLProveSLo}{6.2}
\newcommand{\rqOneOpenMLProveSHi}{135.0}
\newcommand{\rqOneViTProveSLo}{1445}
\newcommand{\rqOneViTProveSHi}{2274}
\newcommand{\rqOneVerifySMax}{4.6}

\newcommand{\TableOneAUROCRows}{
        \texttt{phoneme} & 5,404 & 5 & 0.658 & 0.564 & 0.508 & \textbf{0.872} & \textbf{0.883} & 0.507 & 0.530 & 0.508 & \textbf{0.712} & \textbf{0.741} \\
        \texttt{2dplanes} & 10,000 & 10 & 0.668 & 0.515 & 0.491 & \textbf{0.943} & \textbf{0.938} & 0.517 & 0.514 & 0.492 & \textbf{0.780} & \textbf{0.726} \\
        \texttt{click} & 10,000 & 11 & 0.515 & 0.499 & 0.491 & \textbf{0.594} & \textbf{0.620} & 0.495 & 0.508 & 0.491 & \textbf{0.567} & \textbf{0.723} \\
        \texttt{wind} & 6,574 & 14 & 0.726 & 0.583 & 0.480 & \textbf{0.905} & \textbf{0.889} & 0.517 & 0.528 & 0.480 & \textbf{0.802} & \textbf{0.818} \\
        \texttt{cpu} & 8,192 & 21 & 0.756 & 0.569 & 0.511 & \textbf{0.959} & \textbf{0.952} & 0.499 & 0.541 & 0.513 & \textbf{0.844} & \textbf{0.860} \\
        \texttt{creditcard} & 10,000 & 23 & 0.540 & 0.531 & 0.491 & \textbf{0.713} & \textbf{0.724} & 0.496 & 0.521 & 0.492 & \textbf{0.686} & \textbf{0.773} \\
        \texttt{fraud} & 10,000 & 30 & 0.766 & 0.585 & 0.491 & \textbf{0.952} & \textbf{0.941} & 0.508 & 0.551 & 0.494 & \textbf{0.826} & \textbf{0.795} \\
        \texttt{pol} & 10,000 & 48 & 0.636 & 0.547 & 0.491 & \textbf{0.970} & \textbf{0.957} & 0.499 & 0.520 & 0.491 & \textbf{0.778} & \textbf{0.880} \\
        \texttt{vehicle} & 10,000 & 100 & 0.569 & 0.551 & 0.491 & \textbf{0.853} & \textbf{0.820} & 0.498 & 0.517 & 0.492 & \textbf{0.779} & \textbf{0.771} \\
        \texttt{apsfail} & 10,000 & 170 & 0.783 & 0.539 & 0.491 & \textbf{0.971} & \textbf{0.964} & 0.517 & 0.541 & 0.494 & \textbf{0.916} & \textbf{0.840} \\
        \texttt{mnist} & 60,000 & 768 & 0.747 & -- & -- & \textbf{0.982} & \textbf{0.979} & 0.519 & -- & -- & \textbf{0.855} & \textbf{0.843} \\
        \texttt{cifar10} & 50,000 & 768 & 0.810 & -- & -- & \textbf{0.995} & \textbf{0.998} & 0.504 & -- & -- & \textbf{0.950} & \textbf{0.899} \\
}

\newcommand{\TableOneTimeRows}{
        \texttt{phoneme} & 0.86 & 10.48 & 3.54 & 8.74 & \textbf{0.16} \\
        \texttt{2dplanes} & 0.88 & 47.27 & 8.46 & 28.59 & \textbf{0.86} \\
        \texttt{click} & 0.74 & 58.91 & 9.34 & 24.70 & \textbf{0.31} \\
        \texttt{wind} & 0.43 & 9.20 & 6.54 & 10.91 & \textbf{0.31} \\
        \texttt{cpu} & 0.57 & 30.43 & 9.74 & 17.50 & \textbf{0.19} \\
        \texttt{creditcard} & 1.16 & 73.07 & 12.51 & 25.89 & \textbf{0.54} \\
        \texttt{fraud} & 0.69 & 79.65 & 21.76 & 24.97 & \textbf{0.20} \\
        \texttt{pol} & 0.77 & 27.40 & 17.55 & 27.10 & \textbf{0.23} \\
        \texttt{vehicle} & 0.86 & 398.04 & 14.25 & 28.99 & \textbf{0.78} \\
        \texttt{apsfail} & 0.76 & 18.20 & 20.42 & 28.29 & \textbf{0.47} \\
        \texttt{mnist} & 9.08 & $>$1h & $>$1h & 729.99 & \textbf{3.25} \\
        \texttt{cifar10} & 7.22 & $>$1h & $>$1h & 721.97 & \textbf{4.21} \\
}

\newcommand{\TableOneZKPRows}{
        \texttt{phoneme} & 5,404 & 5 & 32.1 & 2.23 & 963 & 80.2 & 1.42 & 498 & 6.16 & 0.75 & 356 \\
        \texttt{2dplanes} & 10,000 & 10 & 158 & 3.70 & 1389 & 639 & 8.85 & 942 & 21.7 & 0.79 & 469 \\
        \texttt{click} & 10,000 & 11 & 177 & 4.45 & 1389 & 302 & 2.54 & 742 & 11.3 & 0.68 & 423 \\
        \texttt{wind} & 6,574 & 14 & 43.3 & 2.06 & 988 & 158 & 2.72 & 633 & 11.8 & 0.66 & 413 \\
        \texttt{cpu} & 8,192 & 21 & 266 & 7.03 & 1512 & 296 & 2.54 & 751 & 13.1 & 0.84 & 440 \\
        \texttt{creditcard} & 10,000 & 23 & 280 & 5.19 & 1512 & 627 & 9.63 & 950 & 23.2 & 0.92 & 487 \\
        \texttt{fraud} & 10,000 & 30 & 277 & 5.67 & 1515 & 332 & 4.35 & 751 & 13.9 & 0.76 & 452 \\
        \texttt{pol} & 10,000 & 48 & OOM & OOM & OOM & 325 & 3.62 & 843 & 25.7 & 0.89 & 544 \\
        \texttt{vehicle} & 10,000 & 100 & OOM & OOM & OOM & OOM & OOM & OOM & 135 & 1.43 & 680 \\
        \texttt{apsfail} & 10,000 & 170 & OOM & OOM & OOM & 735 & 9.59 & 958 & 10.8 & 0.75 & 499 \\
        \texttt{mnist} & 60,000 & 768 & OOM & OOM & OOM & OOM & OOM & OOM & 1445 & 3.71 & 1164 \\
        \texttt{cifar10} & 50,000 & 768 & OOM & OOM & OOM & OOM & OOM & OOM & 2274 & 4.55 & 1332 \\
}

\theoremstyle{definition}
\newtheorem{example}{Example}[section]

\newtheorem{theorem}{Theorem}[section]

\newcommand{\tool}{\textsc{ZK-Value}\xspace}
\newcommand{\zkls}{\textsc{ZK-LSH-Shapley}\xspace}
\newcommand{\zklsnaive}{\textsf{naive-ZK-LSH-Shapley}\xspace}

\begin{document}

\title{\tool: A Practical Zero-Knowledge System for Verifiable Data Valuation}

\author{Zhaoyu Wang}
\affiliation{
  \institution{HKUST}
  \city{Kowloon}
  \country{Hong Kong SAR}
}

\author{Pingchuan Ma}
\authornote{Corresponding author (email: pma@zjut.edu.cn).}
\affiliation{
  \institution{Zhejiang University of Technology}
  \city{Hangzhou}
  \country{China}
}

\author{Zhantong Xue}
\affiliation{
  \institution{HKUST}
  \city{Kowloon}
  \country{Hong Kong SAR}
}

\author{Yuguang Zhou}
\affiliation{
  \institution{HKUST}
  \city{Kowloon}
  \country{Hong Kong SAR}
}

\author{Qixin Zhang}
\affiliation{
  \institution{Nanyang Technological University}
  \city{}
  \country{Singapore}
}

\author{Xiaoqin Zhang}
\affiliation{
  \institution{Zhejiang University of Technology}
  \city{Hangzhou}
  \country{China}
}

\author{Shuai Wang}
\affiliation{
  \institution{HKUST}
  \city{Kowloon}
  \country{Hong Kong SAR}
}

\begin{abstract}
Data valuation is a foundational task in data marketplaces, where a Shapley-value attribution determines how a buyer's payment is distributed among data providers. Typically, the marketplace operator runs this attribution alone, requiring participants and external auditors to trust scores they cannot independently recompute on the underlying private data. 
While zero-knowledge proofs (ZKPs) can theoretically reconcile this conflict between privacy and verifiability, existing ZK valuation systems fail to scale to real-world marketplace demands due to prohibitive proving times or the requirement to disclose validation cohorts. 

We present ZK-VALUE, a practical, end-to-end ZK data-valuation system. Our solution bridges the scalability gap through a fully co-designed architecture: (1) LSH-Shapley, a locality-based valuation primitive that replaces expensive pairwise distance metrics with per-bucket collision counts; (2) ZK-LSH-SHAPLEY, a tailored ZKP protocol that drastically reduces witness size by encoding these counts into bucket-level histograms rather than naive per-pair tensors; and (3) structural proof-system optimizations, specifically super-oracle batching and sparsity skipping. Evaluated across 12 standard datasets, \tool delivers valuation quality on par with state-of-the-art baselines (within $\rqOneAbsDeltaAurocMax$ AUROC of exact KNN-Shapley), while generating proofs in seconds to minutes and outperforming specialized ZK baselines by $\rqOneOrigMinSpeedup\times$ to $\rqOneOrigPeakSpeedup\times$ in proving time, with verification in under $\rqOneVerifySMax$\,s.

\end{abstract}

\maketitle

\section{Introduction}
\label{sec:intro}

Training data is now a paid commodity traded through \emph{data
marketplaces}~\cite{koutris2015query,deep2017qirana,chen2019towards,liu2021dealer}.
A typical marketplace~\cite{fernandez2020data,pei2023data}  brings together \emph{data providers}, a \emph{buyer} who
pays for aggregated data to train a downstream model, a \emph{marketplace
operator} who runs the transaction, and \emph{external auditors} who oversee
compliance. Every transaction hinges on \emph{data
valuation}~\cite{sim2022data,yoon2020data}: how should payment be divided among providers so
that each share reflects the value its data contributes to the buyer's
downstream task? To date, \emph{Shapley
value}~\cite{ghorbani2019data,10.14778/3342263.3342637,wang2023privacy,kwon2022beta,wang2023data}
has become the de facto standard for answering this question in a principled
way. Each provider is credited in proportion to the marginal gain its data
brings, averaged over all coalitions and scored on the buyer's \emph{validation
cohort}. As a well-known limitation of the Shapley value, however, computing it
directly is often intractable; the community has therefore developed a variety
of ML-specific surrogates, e.g.,
\emph{KNN-Shapley}~\cite{10.14778/3342263.3342637}, that are tractable at modern
training-set sizes and feature dimensions.

Despite the progress in efficiency, however, the Shapley value and its
surrogates remain technically unverifiable in the data valuation setting. The
marketplace operator, who runs the valuation and earns a fee on each
transaction, has a direct financial motive to skew the payout, and nothing in
the setup prevents it from substituting a cheaper approximation, steering
payouts toward a preferred counter-party, or delivering data different from what
the valuation actually saw. Each non-operator party is thus left with an
unauditable concern, and no party can settle that concern without the raw-data
access that privacy forbids. The claimed scores are accepted in practice yet
remain technically unverifiable, a fundamental conflict between \emph{privacy}
and \emph{verifiability} at the heart of modern data marketplaces.

\begin{example}[Hospital Consortium Data Marketplace]
\label{ex:hospital}
Consider a consortium of hospitals selling de-identified patient data to a
pharmaceutical buyer training a drug-response classifier. The operator splits
the buyer's \$1M payment across hospitals according to Shapley values scored on
the buyer's validation cohort. Suppose the operator quietly shifts 5\% of the
payout from a small rural hospital to a large client hospital by tampering with
score entries. As Figure~\ref{fig:marketplace_gap} illustrates, no other party
can catch the shift: the rural hospital sees only its reduced payment with no
way to verify the correct amount, the regulator sees the published scores but is
barred from patient records needed to recompute, and the buyer has no standing
to audit payout allocations. The only recourse, i.e., recomputing scores on raw
patient data, is exactly what privacy policy and regulations forbid. Breaking
this deadlock demands \emph{privacy-preserving and verifiable data valuation}.
\end{example}

\begin{figure}[t]
  \centering
  \resizebox{1\linewidth}{!}{\providecommand{\concernline}[1]{{\scriptsize\itshape\color{red!62!black}#1}}
\providecommand{\stepone}{\textcolor{black!72}{\ding{172}}}
\providecommand{\steptwo}{\textcolor{black!72}{\ding{173}}}

\resizebox{\columnwidth}{!}{
\begin{tikzpicture}[
    font=\sffamily\scriptsize,
    every node/.append style={font=\sffamily\scriptsize},
    party/.style       = {draw, rounded corners=2.8pt, align=center,
                          minimum height=1.30cm, minimum width=1.80cm,
                          inner xsep=2pt, inner ysep=1.5pt,
                          line width=0.85pt, fill=white,
                          font=\sffamily\footnotesize},
    provider/.style    = {party, draw=teal!58!black,
                          top color=teal!22, bottom color=teal!3},
    buyer/.style       = {party, draw=violet!55!black,
                          top color=violet!22, bottom color=violet!3},
    auditor/.style     = {party, draw=orange!65!black,
                          top color=orange!22, bottom color=orange!3},
    operator/.style    = {draw, rounded corners=5pt, align=center,
                          minimum height=1.40cm, minimum width=2.30cm,
                          inner sep=2pt, line width=1.3pt,
                          draw=blue!58!black,
                          top color=blue!38, bottom color=blue!8},
    inflow/.style      = {-{Latex[length=1.9mm, width=1.5mm]},
                          line width=0.85pt, color=black!72},
    outflow/.style     = {-{Latex[length=1.9mm, width=1.5mm]},
                          line width=0.95pt, color=black!72,
                          dash pattern=on 3pt off 1.5pt},
    flabel/.style      = {font=\sffamily\scriptsize, pos=0.4, sloped,
                          fill=white, inner xsep=2pt, inner ysep=0.5pt,
                          text=black!78},
    rlabel/.style      = {font=\sffamily\scriptsize, pos=0.6, sloped,
                          fill=white, inner xsep=2pt, inner ysep=0.5pt,
                          text=black!78},
  ]

  \node[operator] (op) at (0,0) {
    \textbf{\footnotesize Marketplace}\\ \textbf{\footnotesize Operator}\\[-0.5pt]
    \textcolor{red!62!black}{\scriptsize\textit{financial motive to}} \\ \textcolor{red!62!black}{\scriptsize\textit{skew the payout}}
  };

  \node[provider] (pa) at (-2.4, 1.3)
    {\textbf{Provider A}\\[-0.5pt]
     {\scriptsize(rural hospital)}\\[-1pt]
     \concernline{fair share?}};
  \node[provider] (pb) at (-2.4,-1.3)
    {\textbf{Provider B}\\[-0.5pt]
     {\scriptsize(client hospital)}\\[-1pt]
     \concernline{fair share?}};
  \node[buyer] (by) at ( 2.4, 1.3)
    {\textbf{Buyer}\\[-0.5pt]
     {\scriptsize(pharma)}\\[-1pt]
     \concernline{honest score??}};
  \node[auditor] (au) at ( 2.4,-1.3)
    {\textbf{Ext.\ Auditor}\\[-0.5pt]
     {\scriptsize(regulator / press)}\\[-1pt]
     \concernline{no raw access}};

  \draw[inflow]  (pa.east)        to[bend left=22]
        node[flabel]{$D^A_{\mathrm{tr}}$}    (op.140);
  \draw[outflow] (op.160)          to[bend left=22]
        node[rlabel,pos=0.4]{payout A}             (pa.south);

  \draw[inflow]  (pb.east)        to[bend right=22]
        node[flabel]{$D^B_{\mathrm{tr}}$}    (op.220);
  \draw[outflow] (op.200)          to[bend right=22]
        node[rlabel,pos=0.4]{payout B}             (pb.north);

  \draw[inflow]  (by.west)         to[bend right=22]
        node[flabel, align=center]{$D_{\mathrm{val}}$\\ \$1M} (op.45);
  \draw[outflow] (op.15)           to[bend right=22]
        node[rlabel,pos=0.5]{$D_{\mathrm{tr}}$}             (by.south);

  \draw[outflow] (op.315) to[bend right=22]
        node[rlabel,pos=0.4]{score}           (au.west);
\end{tikzpicture}
}}
  \caption{The verifiability gap in a data marketplace. }
  \label{fig:marketplace_gap}
\end{figure}

\parh{Zero-knowledge proofs as the bridge.}~As a cryptographic primitive,
zero-knowledge proofs (ZKPs)~\cite{thaler2022proofs, goldwasser1989knowledge,
bitansky2017hunting} offer a principled way to reconcile privacy with
verifiability. By allowing a prover to demonstrate that a computation was
executed correctly on hidden inputs without revealing the inputs themselves,
ZKPs have already seen successful deployment in important areas like ML
inference \cite{chen2024zkml}, SQL query processing
\cite{li2023zksql,gu2025poneglyphdb}, and data analytics \cite{wang2025privacy}.

However, a cryptographic building block alone does not make a practical marketplace protocol. Adapting this primitive to a multi-party data marketplace dictates four system-level properties that any deployable verifiable data-valuation protocol must jointly satisfy. \emph{(P-1) Public verifiability}: the protocol produces
a single non-interactive proof that any third party verifies
locally. \emph{(P-2) Data privacy with commitment binding}: third-party
verifiers learn nothing about the private inputs beyond what the
released scores reveal, and commitments bind each party to the
data it contributed. \emph{(P-3)
Operator-as-prover architecture}: the prover role can only be filled by a party that
both holds every valuation input and operates infrastructure
capable of amortizing ZK proving across transactions, where only the operator satisfies both
conditions. \emph{(P-4) Practical cost at modern ML scale}: proving must
remain tractable at modern ML training scale, and verification
cheap enough for any third party to check the proof quickly.

\parh{Limitations of existing solutions.} \T~ \ref{tab:overview} shows three representative families each fail at least one requirement (P-1)--(P-4). (1) DP/MPC methods \cite{wang2023privacy,tian2022private,peng2025reliable} hide raw data but lack non-interactive correctness proofs for third-party verification, failing (P-1). (2) ZK-DV \cite{liu2026bridging}, a concurrent work and the only other ZK data-valuation system, forces buyers to disclose validation data to sellers (partial P-2 failure) and requires sellers to run individual provers, forfeiting the marketplace cost-sharing of a shared operator (failing P-3) (3) ZK encodings of existing Shapley algorithms \cite{risc0,sp1} inherit prohibitive computational costs (failing P-4); for example, KNN-Shapley \cite{10.14778/3342263.3342637,wang2023privacy} requires $\Ntrain\!\cdot\!\Ntest\!\cdot\! d$ work plus a top-$k$ sort, while retraining \cite{ghorbani2019data,kwon2022beta,wang2023data} and gradient variants \cite{koh2017understanding} scale even worse. Satisfying all four properties simultaneously requires co-designing the valuation algorithm with its proof system, which is the approach \tool takes.

\begin{table}[t]
  \centering
    \caption{Requirements (P-1)--(P-4) coverage by solution families and \tool. \textcolor{green!55!black}{\ding{51}} = met, \textcolor{red!75!black}{\ding{55}} = not met, \textcolor{orange!85!black}{$\circ$} = partial.}
  \label{tab:overview}
  \resizebox{0.85\linewidth}{!}{
  \begin{tabular}{@{}lcccc@{}}
    \toprule
    \textbf{Solution family} & \textbf{P-1} & \textbf{P-2} & \textbf{P-3} & \textbf{P-4} \\
    \midrule
    DP/MPC valuation                & \textcolor{red!75!black}{\ding{55}} & \textcolor{green!55!black}{\ding{51}} & \textcolor{red!75!black}{\ding{55}} & \textcolor{red!75!black}{\ding{55}} \\
    ZK-DV       & \textcolor{green!55!black}{\ding{51}} & \textcolor{orange!85!black}{$\circ$}  & \textcolor{red!75!black}{\ding{55}} & \textcolor{red!75!black}{\ding{55}} \\
    ZK encoding of existing Shapley & \textcolor{green!55!black}{\ding{51}} & \textcolor{orange!85!black}{$\circ$}  & \textcolor{green!55!black}{\ding{51}} & \textcolor{red!75!black}{\ding{55}} \\
    \tool (ours)                    & \textcolor{green!55!black}{\ding{51}} & \textcolor{green!55!black}{\ding{51}} & \textcolor{green!55!black}{\ding{51}} & \textcolor{green!55!black}{\ding{51}} \\
    \bottomrule
  \end{tabular}
  }

\end{table}

\parh{Our Solution.}~We present \tool, the first practical ZK
data-valuation system deployable in a real marketplace. \tool adopts a multi-party architecture: providers and the buyer publish hiding-and-binding commitments to their datasets, while a marketplace operator computes valuation scores and produces a single non-interactive zero-knowledge proof attesting that the scores are consistent with all commitments.

To make proving costs practical at modern ML scale, we co-design the valuation algorithm and proof system.We introduce LSH-Shapley (\S\ref{sec:method}), which hashes data into locality-sensitive buckets and computes Shapley values from bucket histograms, avoiding explicit pairwise distance computations.
We pair this with \zkls{} (\S\ref{sec:method:bucket}), a tailored ZKP protocol commits only to per-bucket histograms and proves scores directly against them. With optimizations such as super-oracle batching and sparsity skipping (\S\ref{sec:optimizations}), this system enforces input integrity, satisfies (P-1)--(P-4), and reduces per-valuation proving time to a practical seconds-to-minutes range.

We evaluate \tool on 12 standard valuation datasets at realistic marketplace
sizes. On valuation quality, \tool tracks exact KNN-Shapley within
$\rqOneAbsDeltaAurocMax$ AUROC across all datasets while outperforming
gradient-based and Monte-Carlo Shapley baselines. On end-to-end proving cost,
\tool achieves $\rqOneOrigMinSpeedup\times$ to $\rqOneOrigPeakSpeedup\times$
speedups over specialized ZK baselines for KNN- and LSH-Shapley, proves a full
valuation in seconds on tabular workloads with verification under
$\rqOneVerifySMax$\,s, and is the only system that scales to high-dimensional
ViT-embedding workloads.

\parh{Contributions.}~In summary, our contributions are:
\begin{itemize}[leftmargin=*]
  \item \textbf{Conceptually,} we identify and formulate
  \emph{privacy-preserving and verifiable data valuation} as a first-class
  problem arising in cross-organizational data marketplaces, and distill a four
  requirements ((P-1)--(P-4)) for practical solutions.
  \item \textbf{Technically,} we present \tool, a multi-party ZK data valuation system with an operator-as-prover architecture, built on a co-design of the valuation algorithm and underlining proof system: (i) \emph{LSH-Shapley}, a Shapley-family algorithm that replaces KNN-Shapley's pairwise-distance   
cost with per-bucket collision counts, and (ii) \zkls{}, a specialized ZK protocol that certifies LSH-Shapley scores efficiently, together with 
two proof-system optimizations (super-oracle batching and sparsity skipping) that further reduce proof size and prover time.
  \item \textbf{Empirically,} our evaluation across 12 datasets demonstrates that \tool preserves strong valuation quality (within $\rqOneAbsDeltaAurocMax$ AUROC of exact KNN-Shapley) while reducing ZK proving time by $\rqOneOrigMinSpeedup\times$ to $\rqOneOrigPeakSpeedup\times$ over specialized ZK baselines and verifying in under $\rqOneVerifySMax$\,s.

\end{itemize}

\section{Background}
\label{sec:background}

This section recaps the three ingredients \tool rests on:
Shapley-value data valuation (\S\ref{sec:bg:shapley}),
zero-knowledge SNARKs (\S\ref{sec:bg:zkp}), and
locality-sensitive hashing (\S\ref{sec:bg:lsh}).

\subsection{Data Valuation via Shapley Values}
\label{sec:bg:shapley}

\parh{Shapley value.}~Given a utility function
$u : 2^{[\Ntrain]} \to \mathbb{R}$ that scores any subset of the
$\Ntrain$ training points (e.g., the accuracy the buyer's model
attains on the validation cohort after training on that subset), the
\emph{Shapley value} $\phi_i$ of point $i$ is its weighted average
marginal contribution across every coalition of the other points~\cite{shapley1953value}:
\[
  \phi_i
  \;=\;
  \sum_{S \subseteq [\Ntrain] \setminus \{i\}}
  \frac{|S|!\,(\Ntrain - |S| - 1)!}{\Ntrain!}
  \bigl(u(S \cup \{i\}) - u(S)\bigr).
\]
$\phi_i$ is uniquely characterized by four fairness axioms
(efficiency, symmetry, dummy-player,
linearity)~\cite{sundararajan2020many}. Direct evaluation enumerates
$2^{\Ntrain - 1}$ coalitions and retrains the utility on each, so
it is $\#$P-hard for a general utility and infeasible at realistic
marketplace sizes. This has motivated \emph{utility-specific closed
forms} that sidestep the enumeration entirely.

\parh{KNN-Shapley.}~
KNN-Shapley is the standard tractable proxy in this family~\cite{10.14778/3342263.3342637,pandl2021trustworthy}.
It instantiates the utility as a $\knn$-nearest-neighbor classifier scored on a held-out validation set. This approach elegantly collapses the combinatorial Shapley sum into a structure that depends solely on how training points rank in distance to a validation query. While this provides a closed-form solution computable in $O(\Ntrain \log \Ntrain)$ time per validation point, the underlying computation is inherently ill-suited for ZKP encoding. We detail and resolve this challenge in \S\ref{sec:system:challenge}.

\subsection{Zero-Knowledge Proofs}
\label{sec:bg:zkp}

\parh{Setting and security properties.}~A ZKP allows a prover
$\mathsf{P}$ to convince a verifier $\mathsf{V}$ that a statement
about a private witness is true without revealing anything beyond
the statement's truth~\cite{goldwasser1989knowledge}. A ZKP protocol must satisfy three standard
properties~\cite{bitansky2017hunting}: \emph{completeness} (an honest prover always produces
a verifying proof), \emph{soundness} (no malicious prover can
convince the verifier of a false statement except with negligible
probability), and \emph{zero-knowledge} (the proof leaks nothing
about the private witness beyond what is implied by the
statement).

\parh{Primitives we compose.}~\tool builds on three standard ZK
primitives; for each we state what it does, then the formal
guarantee we use, deferring cryptographic details to the cited
references.

\emph{Sumcheck and multilinear extensions.}~The sumcheck protocol~\cite{thaler2022proofs,lund1992algebraic} enables a verifier to validate the sum of a polynomial $g$ over $\Bool^n$ by checking a single evaluation of its multilinear extension at a random point. The protocol requires $n$ rounds with a soundness error of at most $n \cdot \deg(g) / |\Fld|$ via the Schwartz--Zippel lemma. Since prover complexity scales with the degree of $g$, sumcheck cost is governed by the summand's algebraic structure. \tool\ employs sumcheck to enforce bucket-count consistency checks (\S\ref{sec:method}).

\emph{Polynomial commitments.}~A polynomial commitment scheme~\cite{wahby2018doubly} (PCS) allows a prover to publish a succinct digest $\Comm{f}$ and subsequently prove the correctness of evaluations $f(z)$ without revealing the full polynomial. We instantiate our PCS using a Brakedown-style construction~\cite{golovnev2023brakedown} over the Goldilocks field ($p = \goldilocks$). This construction incurs an opening cost of $O(\sqrt{M_{\mathrm{oracle}}})$, a square-root scaling factor that motivates the optimizations discussed in \S\ref{sec:optimizations}.

\emph{Lookup arguments.}~Lookup arguments~\cite{gabizon2020plookup} efficiently verify that a set of committed values exists within a witness table (e.g., verifying bucket counts within $[0, M_{\max}]$). Following the LogUp approach~\cite{thaler2022proofs}, we utilize log-derivative identities to reduce multiple lookups to a single sumcheck instance. This amortizes $m$ checks against a table $T$ into $O(\log |T| + \log m)$ constraints, providing a significant efficiency gain over naive encodings.

\parh{From interactive to non-interactive.}~The three primitives
above are natively interactive. The Fiat--Shamir transform~\cite{fiat1986prove}
collapses any such interactive protocol into a single
\emph{succinct non-interactive argument of knowledge} (SNARK) by
deriving each verifier challenge from a cryptographic hash of the
transcript so far, yielding one static proof any third party can
verify offline.

\subsection{Locality-Sensitive Hashing}
\label{sec:bg:lsh}

\parh{Sensitive hash families.}~Locality-sensitive hashing
(LSH)~\cite{gionis1999similarity,lv2007multi,datar2004locality} is a family of randomized hash
functions under which similar inputs collide with
higher probability than dissimilar ones. A hash family
$\mathcal{H}$ is $(r_1, r_2, p_1, p_2)$-\emph{sensitive} for a
distance $d(\cdot, \cdot)$ if
\[
  \Pr_{h \in \mathcal{H}}\!\bigl[h(x) = h(y)\bigr] \geq p_1
  \text{ when } d(x, y) \leq r_1,
\]
\[
  \Pr_{h \in \mathcal{H}}\!\bigl[h(x) = h(y)\bigr] \leq p_2
  \text{ when } d(x, y) \geq r_2,
\]
with $p_1 > p_2$. For angular similarity on $\mathbb{R}^d$, the
\emph{SimHash}
family~\cite{charikar2002similarity} is
$\{\,h_r : x \mapsto \signbit(\langle r, x \rangle)
    \mid r \sim \mathcal{N}(0, I_d)\,\}$,
with single-bit collision probability
$1 - \theta(x, y)/\pi$ where $\theta(x, y)$ is the angle between
$x$ and $y$. \tool's LSH-Shapley protocol (\S\ref{sec:method:plaintext}) instantiates its hash family with SimHash.

\parh{Bucket structure.}~A single-bit hash is too coarse, so LSH
stacks $\Kdepth$ independent SimHash bits per table into a bucket
id (AND-amplification drops the collision probability to
$(1 - \theta/\pi)^{\Kdepth}$) and runs $\Ntables$ independent
tables in parallel (OR-amplification). Together
$(\Ntables, \Kdepth)$ control bucket granularity and recall:
larger $\Kdepth$ grows the bucket count $\Nbuckets = 2^{\Kdepth}$
and sharpens discrimination, while larger $\Ntables$ averages out
per-table variance.

\section{System Overview}
\label{sec:system}

This section specifies the parties, commitments, and protocol phases that realize \tool (\F~\ref{fig:protocol_architecture}), and states the threat model, security guarantees, and application scope under which the construction is analyzed.

\subsection{System Model}
\label{sec:system:model}

\parh{Parties.}~A \tool deployment involves four parties. \emph{Data
providers} $\provider_1, \dots, \provider_m$ each hold a private
training shard $\subD{i}$ whose rows occupy positions $\indexset{i} \subseteq [\Ntrain]$ in the aggregated training set $\subD{} = \bigcup_i \subD{i}$ of size $\Ntrain$; the index sets $\{\indexset{i}\}_{i \in [m]}$ partition $[\Ntrain]$ and later identify each provider's rows during selective opening. The \emph{buyer} $B$ holds a private validation set
$\subD{\mathrm{val}}$ of size $\Ntest$, which encodes the
downstream task. The \emph{marketplace operator} $\marketplace$
runs the valuation on behalf of providers and the buyer. \emph{External verifiers} (regulators, non-participating providers, investigative journalists) participate only through the protocol's public information and hold no private witnesses or per-provider scores.

\parh{Commitments.}~Each provider $\provider_i$ submits
$\subD{i}$ to $\marketplace$ for valuation and publishes
$\commitment{i} = \Comm{\subD{i}}$, where $\Comm{\cdot}$ is the cryptographic commitment scheme of \S\ref{sec:bg:zkp}: binding so that the prover cannot later open against a different witness, and hiding so that $\commitment{i}$ leaks nothing about $\subD{i}$. The buyer does the same with $\subD{\mathrm{val}}$ and $\commitment{\mathrm{val}} = \Comm{\subD{\mathrm{val}}}$. The
commitments are the public references the rest of the protocol
operates against: every claim about the released scores is later
checked against them, so the prover cannot quietly compute on a
different witness.

\begin{figure}[t]
  \centering
  \includegraphics[width=\columnwidth]{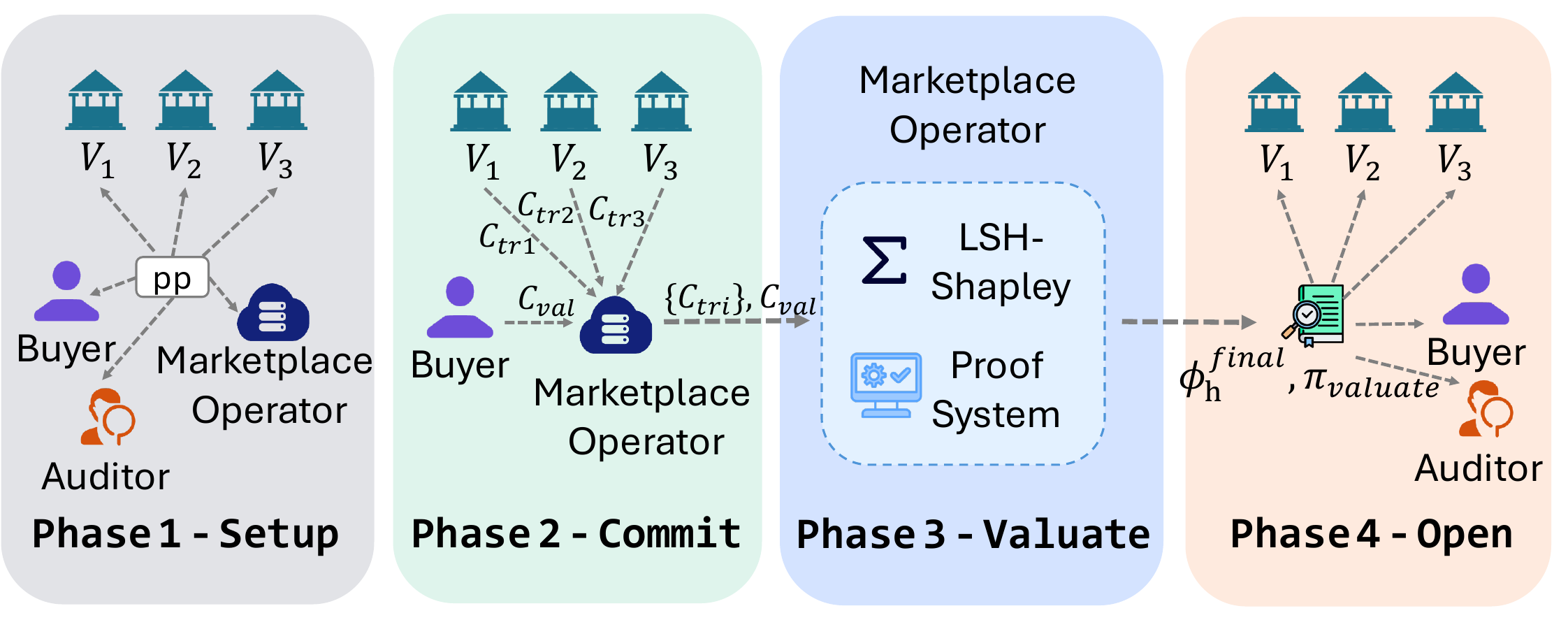}
  \caption{End-to-end architecture of \tool.}
  \label{fig:protocol_architecture}
\end{figure}

\parh{Four protocol phases.}~\tool's valuation pipeline runs in
four sequential phases (Figure~\ref{fig:protocol_architecture}).
\emph{(1) Setup.}~$\Setup(1^\lambda) \to (\pp, \mathsf{vk})$ produces the public parameters and verification key for the polynomial commitment scheme and the Fiat--Shamir-transformed sumcheck backend, once per deployment.
\emph{(2) Commit.}~Providers and the buyer submit their datasets
to $\marketplace$ and publish the corresponding commitments
$\{\commitment{i}\}_{i \in [m]}$ and $\commitment{\mathrm{val}}$.
\emph{(3) Valuate.}~$\marketplace$ ingests the private witnesses (the providers' datasets and the buyer's validation set), computes the per-training-point Shapley scores $\{\phifinal_h\}_{h \in [\Ntrain]}$, and produces a single non-interactive proof $\pi_{\mathrm{valuate}}$ certifying that these scores are consistent with every published commitment.
\emph{(4) Open.}~Each provider $\provider_i$ receives only ${\phifinal_h : h \in \indexset{i}}$ together with a selective opening bound to $\commitment{i}$, while all other providers' scores remain hidden. The opening is accompanied by $\pi_{\mathrm{valuate}}$, which certifies the correctness of the released scores.

\subsection{Threat Model and Security Goals}
\label{sec:system:threat}

\parh{Threat model.}~A valuation transaction touches three assets
that need protection: the provider shards
$\{\subD{i}\}_{i \in [m]}$, the buyer's validation set
$\subD{\mathrm{val}}$, and the integrity of the valutaion scores on those
committed inputs. We assume each role of \S\ref{sec:system:model}
may deviate adversarially against these assets. A dishonest provider $\provider_i$ may craft shards to inflate their score or attempt to recover peers' data from the transcript. A curious buyer may use score signals for inference attacks against shards or "free-ride" by abandoning settlement once the value is revealed. A \emph{malicious operator} $\marketplace$ observes the submitted datasets, executes the prover, and collects a transaction fee, giving it a direct stake in the outcome it computes; it may therefore substitute inputs inconsistent with the commitment, bias the valuation function, or leak per-provider scores to unauthorized parties.

\parh{Security guarantees.}~\tool addresses these threats by leveraging ZKP primitives and their underlying commitment schemes (\S\ref{sec:bg:zkp}). During the commit phase, input integrity is established via binding (the prover cannot later prove against a dataset different from the one each party committed) and hiding (verifiers learn nothing about the committed datasets from the commitments alone). Subsequently, the proof $\pi_{\mathrm{valuate}}$ enforces correctness, guaranteeing that released scores derive strictly from the committed inputs and the agreed valuation function. \tool formally inherits
completeness, knowledge soundness, and zero-knowledge from its
underlying ZKP primtives (\S\ref{sec:bg:zkp}), with specific realizations detailed in \S\ref{sec:method:bucket}.

\subsection{Application Scope}
\label{sec:system:scope}

\tool provides an integrity layer for Shapley-based data valuation for classification tasks in cross-organizational marketplaces. It offers providers, buyers, and auditors public, non-interactive verifiability of released scores against committed inputs, scaling to modern ML workloads. Privacy-focused alternatives address orthogonal goals: DP-TKNN-Shapley~\cite{wang2023privacy} mitigates membership-inference leakage, while MPC distributes computation to hide private inputs. Neither generates a publicly verifiable certificate of correct execution, so they complement \tool rather than substitute for it (detailed in \S\ref{sec:related-work}).

\subsection{The Algorithmic Challenge}
\label{sec:system:challenge}

\parh{The natural baseline.}~The protocol of
\S\ref{sec:system:model} fixes the parties, commitments, and four
phases, but leaves one slot open: which valuation function does
$\marketplace$ certify in $\pi_{\mathrm{valuate}}$? A natural choice is KNN-Shapley~\cite{10.14778/3342263.3342637}. Among the estimators in \S\ref{sec:bg:shapley}, KNN-Shapley is the only one with an $O(\Ntrain \log \Ntrain)$ closed form per validation point, making it the standard tractable proxy in data-valuation literature~\cite{wang2023privacy,wang2023data}. Instantiating \tool\ with KNN-Shapley is thus the most conservative approach: it ensures that the ZK layer carries the verifiability burden without introducing unstudied valuation trade-offs.

\parh{Why the baseline fails ZK encoding.}~Encoding valuation inside $\pi_{\mathrm{valuate}}$ requires certifying every operation against committed inputs. For KNN-Shapley, this imposes two structural costs scaling as $\Ntrain \cdot \Ntest$: (1) computing the full distance matrix, which requires an inner-product commitment per training-validation pair, and (2) ranking these distances, which demands costly non-arithmetic constraints for each validation point. While this $\Ntrain \cdot \Ntest$ complexity is workable in plaintext, ZK certification overhead makes it the dominant proof bottleneck. Tractable ZK valuation therefore requires an algorithm and proof system that decouple from this $\Ntrain \cdot \Ntest$ scaling (Requirement P-4, \S\ref{sec:intro}). We develop both pieces in \S\ref{sec:method}, with further optimizations in \S\ref{sec:optimizations}.

\section{Method}
\label{sec:method}

This section introduces the valuation function that \tool certifies, \emph{LSH-Shapley} (\S\ref{sec:method:plaintext}), together with \zkls{} (\S\ref{sec:method:bucket}), the specialized ZK protocol that produces this certificate.

\subsection{LSH-Shapley}
\label{sec:method:plaintext}

\emph{LSH-Shapley} replaces KNN-Shapley's pairwise-distance signal with bucket co-membership under a locality-sensitive hash. We state the utility in \S\ref{sec:method:plaintext:motivation} and derive its closed-form Shapley value in \S\ref{sec:method:plaintext:spec} (Theorem~\ref{thm:lsh-shapley-closed}).

\subsubsection{Bucket-Based Nearest Neighbor Utility}
\label{sec:method:plaintext:motivation}
Before stating the utility, we fix the hash family and the bucket-count primitives that its closed form (\S\ref{sec:method:plaintext:spec}) is built on.

\textbf{Hash family.}
We instantiate the LSH primitive of \S\ref{sec:bg:lsh} with
SimHash. Each hash bit is the sign of a Gaussian inner product
$\signbit\langle r, x\rangle$ with $r \sim \mathcal{N}(0, I_d)$,
so its underlying metric is angular similarity, the appropriate
notion for $\ell_2$-normalized neural embeddings. Stacking
$\Kdepth$ such bits gives a per-table bucket id
$B_\ell(x) = \hashkey_\ell(x) \in \{0,1\}^{\Kdepth}$ with
expected occupancy $\Ntrain / 2^{\Kdepth}$, and we run
$\Ntables$ independent tables to drive estimator variance down.
Among LSH families, SimHash also has the lowest circuit degree
under our ZK back-end: the inner product $\langle r, x \rangle$
is linear in $x$, and sign extraction reduces to a single lookup.
Alternatives such as $p$-stable hashes~\cite{datar2004locality} (floor gates) and MinHash~\cite{broder1998min}
(set-encoding gates) raise the per-bit circuit degree
substantially.

\textbf{Bucket and counts.}
The \emph{bucket} of validation point $x_t^{\mathrm{te}}$ in
table $\ell$ is the set of training points sharing its bucket id,
\(
\mathcal{B}_\ell^t := \{\,j \in [\Ntrain] :
  B_\ell(x_j^{\mathrm{tr}}) = B_\ell(x_t^{\mathrm{te}})\,\}.
\)
We summarize each $(\ell, t)$ pair by two counting queries on
the committed data: $M(\ell, t) := |\mathcal{B}_\ell^t|$
(bucket size) and
$M^{+}(\ell, t) := |\{j \in \mathcal{B}_\ell^t : y_j = y_t\}|$
(same-label count). Both are simple histograms over bucket ids,
which is what eventually makes the closed form of
\S\ref{sec:method:plaintext:spec} ZK-friendly.

\textbf{Utility function.}
For a coalition $S \subseteq [\Ntrain]$, validation point $t$,
and table $\ell$, we define the LSH-Shapley utility as the
fraction of $S$'s bucket-mates whose label agrees with $y_t$:
\begin{equation}
\label{eq:lsh-utility}
u_\ell^t(S) :=
\begin{cases}
0 & |S \cap \mathcal{B}_\ell^t| = 0, \\[2pt]
\dfrac{\bigl|\{\,j \in S \cap \mathcal{B}_\ell^t : y_j = y_t\,\}\bigr|}
      {\bigl|S \cap \mathcal{B}_\ell^t\bigr|}
  & |S \cap \mathcal{B}_\ell^t| > 0.
\end{cases}
\end{equation}
This form is motivated by three observations.
(i) Non-colliders ($j \notin \mathcal{B}_\ell^t$) are Shapley dummies, so the $\Ntrain$-player game restricts to an $M$-player subgame on the bucket.
(ii) If $\mathcal{B}_\ell^t$ equals a $K$-nearest-neighbor set, $u_\ell^t(S)$ reduces to the fraction-of-correct-votes utility of~\cite{10.14778/3342263.3342637}, preserving KNN-Shapley semantics.
(iii) For any coalition $S$, the utility $u_\ell^t(S)$ depends on the data only through two integers, $|S \cap \mathcal{B}_\ell^t|$ and $|\{j \in S \cap \mathcal{B}_\ell^t : y_j = y_t\}|$. This dependence on aggregate bucket counts yields the closed-form Shapley value of \Thm~\ref{thm:lsh-shapley-closed}, expressed solely through $M(\ell, t)$ and $M^{+}(\ell, t)$, letting LSH-Shapley avoid the sorting and pairwise-distance proofs that KNN-Shapley would otherwise require in ZKP.

\subsubsection{Closed-Form Shapley Values}
\label{sec:method:plaintext:spec}

With the utility $u_\ell^t$ fixed by \eqref{eq:lsh-utility}, we now derive its closed-form Shapley value. This closed form is the structural enabler of \tool: it reduces the exponential coalition sum to a constant-time expression in two histogram counts, and it determines the algebraic shape of the ZKP protocol of \S\ref{sec:method:bucket} (the protocol design is the paper's main contribution). For brevity, let $\harm{M} := \sum_{j=1}^{M} 1/j$ denote the $M$-th harmonic number, and let $M$ and $M^{+}$ stand for the bucket counts $M(\ell, t)$ and $M^{+}(\ell, t)$ when no ambiguity arises. With this notation in place, the Shapley value of $u_\ell^t$ admits the following closed form.

\begin{theorem}[LSH-Shapley closed form]
\label{thm:lsh-shapley-closed}
Let $\phi^{\pm}(\ell, i, t)$ denote the Shapley value of training point $i$ under $u_\ell^t$, where the $+$ branch applies when the labels match ($y_i = y_t$) and the $-$ branch applies when they disagree. The value depends on the data only through $(M, M^{+}, \harm{M})$. For non-colliders ($i \notin \mathcal{B}_\ell^t$), $\phi^{\pm} = 0$; for singleton buckets ($M = 1$), $\phi^{\pm} = \mathbf{1}[y_i = y_t]$; for $i \in \mathcal{B}_\ell^t$ with $M \geq 2$,
\begin{equation}
\label{eq:phi-pm}
\phi^{\pm} = \begin{cases}
\dfrac{\harm{M}}{M} - \dfrac{(M^{+}-1)(\harm{M}-1)}{M(M-1)} & y_i = y_t, \\[6pt]
-\,\dfrac{M^{+}(\harm{M}-1)}{M(M-1)} & y_i \neq y_t.
\end{cases}
\end{equation}
\end{theorem}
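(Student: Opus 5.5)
The plan is to compute $\phi^{\pm}(\ell,i,t)$ from the random-permutation form of the Shapley value. Fix $\ell$ and $t$ and write $\mathcal{B} := \mathcal{B}_\ell^t$. For non-colliders, the claim $\phi^{\pm}=0$ follows from the dummy axiom. If $i\notin\mathcal{B}$, then $S\cap\mathcal{B} = (S\cup\{i\})\cap\mathcal{B}$ for every $S$, and the labels of points outside $\mathcal{B}$ never enter \eqref{eq:lsh-utility}. So every marginal contribution of $i$ vanishes.

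For $i\in\mathcal{B}$, the same observation shows that $u_\ell^t(S)$ depends on $S$ only through $S\cap\mathcal{B}$. The Shapley value of $i$ therefore equals its Shapley value in the $M$-player game on $\mathcal{B}$. I will express that value as the expected marginal contribution of $i$ under a uniformly random ordering of $\mathcal{B}$.

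In such an ordering, the number $s$ of bucket-mates preceding $i$ is uniform on $\{0,\dots,M-1\}$. Conditioned on $s$, the predecessor set is a uniform $s$-subset of $\mathcal{B}\setminus\{i\}$. Let $p$ be the number of predecessors with label $y_t$.

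\begin{itemize}
\item If $y_i=y_t$, the set $\mathcal{B}\setminus\{i\}$ contains $M^{+}-1$ same-label points. Then $\mathbb{E}[p\mid s]=s(M^{+}-1)/(M-1)$, by a hypergeometric mean.
\item If $y_i\neq y_t$, the set $\mathcal{B}\setminus\{i\}$ contains $M^{+}$ same-label points. Then $\mathbb{E}[p\mid s]=sM^{+}/(M-1)$.
\end{itemize}

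Next I compute the marginal contribution conditioned on $(s,p)$.

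\begin{itemize}
\item When $s=0$, the contribution is $\mathbf{1}[y_i=y_t]-0$. This already settles the singleton case $M=1$, where $s=0$ is forced.
\item When $s\ge 1$ and the labels match, the contribution is $\frac{p+1}{s+1}-\frac{p}{s}=\frac{s-p}{s(s+1)}$.
\item When $s\ge 1$ and the labels differ, the contribution is $\frac{p}{s+1}-\frac{p}{s}=-\frac{p}{s(s+1)}$.
\end{itemize}

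The key point is that both expressions are affine in $p$. Taking the conditional expectation therefore only requires substituting the hypergeometric means, and the factor $s$ cancels. This gives:

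\begin{itemize}
\item in the matching case, $\bigl(1-\frac{M^{+}-1}{M-1}\bigr)\frac{1}{s+1}$;
\item in the mismatching case, $-\frac{M^{+}}{M-1}\cdot\frac{1}{s+1}$.
\end{itemize}

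Finally I average over $s$ with weight $1/M$, using $\sum_{s=1}^{M-1}\frac{1}{s+1}=\harm{M}-1$.

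\begin{itemize}
\item In the matching case, $\phi^{+}=\frac{1}{M}\bigl[1+\bigl(1-\frac{M^{+}-1}{M-1}\bigr)(\harm{M}-1)\bigr]$. This rearranges to $\frac{\harm{M}}{M}-\frac{(M^{+}-1)(\harm{M}-1)}{M(M-1)}$.
\item In the mismatching case, $\phi^{-}=-\frac{M^{+}(\harm{M}-1)}{M(M-1)}$.
\end{itemize}

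Both agree with \eqref{eq:phi-pm}. The derivation also shows directly that the value depends only on $(M,M^{+},\harm{M})$.

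The main point needing care is the reduction to the bucket subgame together with the conditional-uniformity claim. Under a uniform permutation of all $\Ntrain$ points, the bucket-mates preceding $i$ must form a uniformly random subset whose size is uniform on $\{0,\dots,M-1\}$. I would justify this by noting that the relative order of $\mathcal{B}$ inside a uniform permutation of $[\Ntrain]$ is itself a uniform permutation of $\mathcal{B}$, and that non-bucket predecessors do not affect $u_\ell^t$.

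The remaining steps are routine algebra. The only other thing to watch is that $s=0$ is handled separately, since the utility is defined to be $0$ on empty intersections.
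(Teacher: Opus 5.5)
Your proposal is correct and follows essentially the same route as the paper: you reduce to the $M$-player bucket subgame via the dummy observation, use the same marginal contributions $\frac{s-p}{s(s+1)}$ and $-\frac{p}{s(s+1)}$, and average over coalition sizes. Your uniform-$s$ weight and hypergeometric means are exactly the paper's identity $\frac{k!(M-k-1)!}{M!}\binom{M-1}{k}=\frac{1}{M}$ and its (weighted) Vandermonde sums, phrased probabilistically; this phrasing also spells out the $y_i \neq y_t$ case, which the paper only calls ``analogous.''
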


The proof, the verification of all four Shapley axioms, and the
sign / monotonicity properties are deferred to
\appref{app:closed-form}.

\textbf{Released score per training point.} Combining Theorem~\ref{thm:lsh-shapley-closed} with the linearity of the Shapley value in the utility, the score \tool{} releases for training point $i$ is the average of the per-table, per-validation contributions,
\[
\svavg_i \;:=\; \frac{1}{\Ntables \cdot \Ntest} \sum_{\ell, t} \phi^{\pm}(\ell, i, t),
\]
and we also work with the unnormalized form $\svsum_i := \Ntables \cdot \Ntest \cdot \svavg_i$, which is what the table-aggregation sumcheck of \S\ref{sec:method:bucket} certifies in the verified circuit.

\textbf{Why averaging, not OR-amplification.} \tool{} aggregates the $\Ntables$ tables by mean rather than the OR-amplification rule of standard LSH nearest-neighbor search (\S\ref{sec:bg:lsh}). \emph{Statistically}, the per-table games are i.i.d. random partitions, so the table-mean is unbiased for its expectation over the random hash projections (concentration rate below). \emph{Algorithmically}, Shapley linearity in the utility decomposes the table-mean utility's Shapley value into $\Ntables$ independent per-table evaluations of Theorem~\ref{thm:lsh-shapley-closed}, with the released score their arithmetic mean. OR-amplification breaks this: it combines per-table memberships by a non-linear Boolean OR rather than a sum, so Shapley linearity no longer factors across tables and the per-table closed form does not transfer. This Boolean OR is also more expensive to certify in the ZK backend than an arithmetic mean.

\textbf{Algorithm.} Algorithm~\ref{alg:lsh-shapley} presents LSH-Shapley as four phases that map one-for-one onto the verification obligations of \S\ref{sec:method:bucket:sumchecks}: Phase~1 hashes every training and validation point, Phase~2 queries the bucket counts $(M, M^{+})$ off those hashes, Phase~3 evaluates $\phi^{\pm}$ via \eqref{eq:phi-pm} against a precomputed harmonic table, and Phase~4 aggregates per-sample scores across the $\Ntables \cdot \Ntest$ table-validation pairs.

\begin{algorithm}[t]
\caption{LSH-Shapley}
\label{alg:lsh-shapley}
\begin{algorithmic}[1]
\Require Training set
$\{(x_j^{\mathrm{tr}}, y_j)\}_{j=1}^{\Ntrain}$,
validation set $\{(x_t^{\mathrm{te}}, y_t)\}_{t=1}^{\Ntest}$,
hash tables $\Ntables$, hash depth $\Kdepth$, public Gaussian
projection vectors
$r_\ell^{(k)} \sim \mathcal{N}(0, I_d)$ for
$\ell \in [\Ntables]$, $k \in [\Kdepth]$.
\Ensure Per-sample scores $\{\svavg_i\}_{i=1}^{\Ntrain}$.
\State \textbf{Phase~1 (Hash).} For each
$\ell \in [\Ntables]$ and each input
$x \in \{x_j^{\mathrm{tr}}\}_{j} \cup \{x_t^{\mathrm{te}}\}_{t}$,
compute the bucket id
$B_\ell(x) \gets
\bigl(\signbit\langle r_\ell^{(k)}, x\rangle\bigr)_{k < \Kdepth}$.
\State \textbf{Phase~2 (Bucket statistics).} For each
$(\ell, t)$, form
$\mathcal{B}_\ell^t = \{j : B_\ell(x_j^{\mathrm{tr}}) =
  B_\ell(x_t^{\mathrm{te}})\}$
and record the counts $M(\ell, t)$ and $M^{+}(\ell, t)$.
\State \textbf{Phase~3 (Per-bucket weight).} For each
$(\ell, t, i)$, evaluate $\phi^{\pm}(\ell, i, t)$ via
\eqref{eq:phi-pm} using a precomputed harmonic table
$\{\harm{m}\}_{m=1}^{\Ntrain}$.
\State \textbf{Phase~4 (Aggregate).} For each
$i \in [\Ntrain]$, set
$\svavg_i \gets (\Ntables \cdot \Ntest)^{-1}
  \sum_{\ell, t} \phi^{\pm}(\ell, i, t)$.
\State \Return $\{\svavg_i\}_{i=1}^{\Ntrain}$.
\end{algorithmic}
\end{algorithm}

\textbf{Complexity.} A single evaluation of \eqref{eq:phi-pm} on the inputs $(M, M^{+}, \harm{M})$ runs in $O(1)$ time. There are $\Ntables \cdot \Ntest \cdot \Ntrain$ triples $(\ell, i, t)$ in total, so computing all of $\{\phi^{\pm}(\ell, i, t)\}$, together with the bucket histograms they depend on, costs $O(\Ntables \cdot \Ntest \cdot \Ntrain)$ overall. The feature dimension $d$ enters only once, through the hashing step (Phase~1 of Algorithm~\ref{alg:lsh-shapley}).

\textbf{Convergence.} Theorem~\ref{thm:lsh-shapley-closed} is exact for each per-table game $u_\ell^t$, so the only randomness in $\svavg_i$ is over the hash projections $r_\ell^{(k)}$. We therefore view $\svavg_i$ as a Monte~Carlo estimator of its expectation $\mathbb{E}_{r}[\svavg_i]$ over the random hash family. Since the $\Ntables$ tables are i.i.d. random partitions, this estimator is unbiased and concentrates at rate $O(1/\sqrt{\Ntables})$ around its expectation; the formal statement and the rank-stability bound that drives our parameter selection are in Theorem~1 of \appref{app:thm1-proof}.

\subsection{ZKP Protocol for LSH-Shapley}
\label{sec:method:bucket}

We now describe \zkls{}, the ZKP protocol that certifies Theorem~\ref{thm:lsh-shapley-closed}'s closed-form score on committed training and validation inputs.

\subsubsection{Witness Design: Why Bucket Histograms}
\label{sec:method:bucket:witnesses}

A key design choice in \zkls{} is the witness granularity. By \Thm~\ref{thm:lsh-shapley-closed}, $\phi^{\pm}$ depends on an $(\ell, i, t)$ triple solely through the bucket counts $(M(\ell, t), M^{+}(\ell, t))$ and the harmonic value $\harm{M}$. The witness must deliver these efficiently. We compare a natural per-pair baseline against \zkls{}'s bucket-level encoding.

\textbf{Naive encoding (\zklsnaive{}).} A direct per-pair approach commits a binary tensor $\member(\ell, t, i) = \mathbf{1}[B_\ell(x_i^{\mathrm{tr}}) = B_\ell(x_t^{\mathrm{te}})]$ indicating bucket co-membership. This requires full-tensor consistency sumchecks to recover $M$ and $M^{+}$. This incurs two avoidable costs. First, witness size scales with the validation-set size $\Ntest$, unnecessarily tracking per-pair interactions when only bucket-level counts are needed. Second, verifying each bit forces sumchecks to re-derive bucket IDs from $\Kdepth$ sign bits, pushing the per-round polynomial degree to $\Kdepth + O(1)$. Consequently, both prover work and proof size scale poorly as $\mathcal{O}(\Ntest \cdot \Ntables \cdot \Ntrain)$.

\textbf{Bucket histograms.} To avoid overcommitting, \zkls{} directly commits bucket statistics. The base histogram $\cnt[\ell, b]$ and its per-class refinements $\cnttr[\ell, b, c]$ and $\cntte[\ell, b, c]$ count points hashing to each bucket. Spanning $\mathcal{O}(\Ntables \cdot \Nbuckets \cdot \Nclass)$ entries, this bucket-side witness drops the $\Ntest$ axis entirely. To recover per-sample statistics, we define three \emph{materialized lookup oracles} indexing the pre-resolved count at training point $i$'s bucket:
\[
\begin{aligned}
\mhat[\ell, i]  &\;:=\; \cnt\bigl[\ell,\, B_\ell(x_i^{\mathrm{tr}})\bigr], \\
\mchat[\ell, i] &\;:=\; \cnttr\bigl[\ell,\, B_\ell(x_i^{\mathrm{tr}}),\, y_i\bigr], \\
\tchat[\ell, i] &\;:=\; \cntte\bigl[\ell,\, B_\ell(x_i^{\mathrm{tr}}),\, y_i\bigr].
\end{aligned}
\]
The reason for committing $\mhat, \mchat, \tchat$ separately rather than letting the weight sumcheck look up $\cnt[\ell, B_\ell(x_i^{\mathrm{tr}})]$ on the fly is to keep the per-round polynomial degree low. An on-the-fly lookup would entangle the histogram data with the bucket-routing bits, drastically increasing the computational overhead of the proof. Pre-computing these arrays allows the main proof to query them at minimal cost, requiring only a small, one-time auxiliary check to ensure they correctly match the base histograms. The harmonic value $\harm{M}$ also enters the closed form, but it is a deterministic public function of $M$; once the histograms establish $M$, $\harm{M}$ is implicitly verified without requiring an extra step.

\textbf{Commitment size.} The full protocol commits five families of oracles. The bucket-side family ($\cnt, \cnttr, \cntte$ and the per-sample lookups $\mhat, \mchat, \tchat$, all introduced above) takes $O(\Ntables \cdot \Ntrain + \Ntables \cdot \Nbuckets \cdot \Nclass)$, independent of $\Ntest$. The hashing-side family (the feature MLEs $\MLExtrain, \MLExtest$ together with the sign and dot-product oracles that bind the LSH evaluation to those features) adds $\Ntest \cdot d$ for the validation features (where $d$ denotes the feature dimension). The per-class label indicators $y_{\mathrm{ind}}, y_{\mathrm{test\_ind}}$ that resolve the same-label test inside the closed form add $\Nclass \cdot \Ntest$. The weight-identity advice ($W, e, D_{\mathrm{inv}}, \mathsf{wt\_num}$) that turns the closed-form division and edge-case split into low-degree polynomial identities, together with the aggregate oracles $\svsum, \svavg$, contribute $O(\Ntables \cdot \Ntrain)$. Table~\ref{tab:plaintext-to-zk-mapping} lists every committed oracle in dataflow order, paired with the protocol step that consumes it.

\begin{table}[t]
\small
\caption{Committed-oracle inventory for \zkls{}, in dataflow order. Each plaintext object of \S\ref{sec:method:plaintext} is mapped to the committed oracle and the protocol step that consumes it.}
\label{tab:plaintext-to-zk-mapping}
\centering
\begin{tabular}{@{}p{2.5cm}p{3.4cm}p{2.0cm}@{}}
\toprule
\textbf{Plaintext object} & \textbf{Committed oracle} & \textbf{Verified by} \\
\midrule
Features (train / test) & $\MLExtrain, \MLExtest$ & Hash-binding \\
Hash bits & $\signbit, \mathsf{abs\_dp}, \dotprod$ & Hash-binding \\
Per-bucket histogram & $\cnt$ & Histogram \\
Per-class histograms & $\cnttr, \cntte$ & Histogram \\
Label indicators & $y_{\mathrm{ind}}, y_{\mathrm{test\_ind}}$ & Histogram \\
Per-sample lookups & $\mhat, \mchat, \tchat$ & Lookup \\
Shapley weight & $W, e, D_{\mathrm{inv}}, \mathsf{wt\_num}$ & Weight \\
Aggregates & $\svsum, \svavg$ & Weight \\
Harmonic value & $\harm{M}$ (not committed) & Reconstruction \\
\bottomrule
\end{tabular}
\end{table}

\subsubsection{Protocol Construction}
\label{sec:method:bucket:sumchecks}

Given the witnesses of \S\ref{sec:method:bucket:witnesses}, the verifier discharges four obligations Algorithm~\ref{alg:lsh-shapley} imposes (honest hashing, histogram faithfulness, closed-form weight, aggregation) plus the marketplace requirement (commit-binding to providers and disclosure to buyer). These factor into a hashing module upstream and four chained verification layers downstream, each binding the next adjacent commitment level via a small group of sumcheck modules, summarized at a glance in Algorithm~\ref{alg:lsh-bucket}.

\textbf{Hashing module.} The hashing module sits upstream of the four layers and discharges the honest-hashing obligation by tying the sign and dot-product oracles to the committed feature MLEs. We utilize existing sumcheck protocols for sign, Boolean, and absolute dot-product operations~\cite{zhang2025fairzk}, which verify that $\signbit\langle r_\ell^{(k)}, x \rangle$ is Boolean and that the absolute dot products are correctly computed. This architecture successfully isolates the feature dimension to the input layer; downstream sumchecks process only verified bucket assignments, never raw features.

\textbf{Layer~1: histogram consistency.} With the bucket assignments verified upstream, Layer~1 binds the committed histograms to those assignments, discharging the histogram-faithfulness obligation (Phase~2 of Algorithm~\ref{alg:lsh-shapley}). The training-histogram consistency sumcheck binds $\cnt$ via the identity
\[
\sum_{(\ell, i) \in \Bool^{\log\Ntables + \log\Ntrain}} \mathrm{eq}_\Kdepth\bigl(B_\ell(x_i^{\mathrm{tr}}),\, b\bigr) \;=\; \cnt[\ell, b].
\]
Two analogous sumchecks bind $\cnttr$ and $\cntte$ on the per-class refinements (the per-class training-histogram consistency sumcheck and the per-class validation-histogram consistency sumcheck); per-module target sums and round-degrees are tabulated in \appref{app:module-table}. Round-degrees are constant in $\Kdepth$ ($\Kdepth + 1$ to $\Kdepth + 2$); this layer dominates prover time.

\textbf{Layer~2: lookup consistency.} With the histograms now bound to the bucket assignments, Layer~2 binds the per-sample materialized lookups to those histograms, completing the second half of Phase~2's faithfulness obligation. The bucket-size lookup sumcheck binds $\mhat$ via
\[
\mhat[\ell, i] \;=\; \cnt\bigl[\ell,\, B_\ell(x_i^{\mathrm{tr}})\bigr],
\]
verified through an $\mathrm{eq}_\Kdepth$-based identity sumcheck on $(\Ntables, \Ntrain)$ paired with a degree-3 squared-count auxiliary on $(\Ntables, \Nbuckets)$. Two analogous lookup sumchecks (the same-class lookup and the validation-class lookup, each main and squared-count auxiliary) bind $\mchat$ and $\tchat$ to their respective histograms; full identities in \appref{app:module-table}.

\textbf{Layer~3: weight identity and aggregation.} With per-sample lookups now bound to the histograms, Layer~3 plugs them into the closed form: it verifies that the committed weight $W$ satisfies~\eqref{eq:phi-pm} on every training point and that the published $\svavg$ is the correct aggregate (Phases~3 and~4 of Algorithm~\ref{alg:lsh-shapley}). The edge-case weight identity binds $W$ on the $M \in \{0, 1\}$ region:
\[
W \;=\; \mhat[\ell, i] \cdot \Bigl(\sum_c y_{\mathrm{ind},c}[i] \cdot \tchat[\ell, i]\Bigr) \quad \text{on } e = 1.
\]
The interior weight identity ($M \geq 2$) clears the denominator $M(M-1)$ via a committed inverse $D_{\mathrm{inv}}$ and a numerator witness $\mathsf{wt\_num}$; auxiliary checks discharge the boolean of the edge flag $e$ and the denominator-inverse structural identities. The table-aggregation sumcheck collapses $W$ across $\Ntables$ tables into $\svsum$, and a normalization sumcheck~\cite{zhang2025fairzk} divides by $\Ntables \cdot \Ntest$ to obtain $\svavg$.

\textbf{Layer~4: input/output reconstruction.} Layers~1--3 produce a single committed $\svavg$ from honest LSH-Shapley evaluation, but in the marketplace setting that is not enough: the protocol must additionally bind $\svavg$ to the providers who actually contributed and let each buyer privately verify only their own scores. Layer~4 closes the binding chain on both sides, tying $\svavg$ back, on the input side, to per-provider data commitments $\cm{\subD{i}}$, and forward, on the output side, to per-provider score-slice commitments $\cm{\subphi{i}}$. Two structurally identical reconstruction sumchecks discharge the two halves. For each provider $i$ over the disjoint contiguous slice $\indexset{i}$ with $h = (h_{\mathrm{low}}, h_{\mathrm{high}})$, the output-side reconstruction proves
\[
\MLE{\svavg}(h) \;=\; \sum_{i=1}^{m} \eq(h_{\mathrm{high}}, \mathrm{bin}(i)) \cdot \subphi{i}(h_{\mathrm{low}}),
\]
reducing to one PCS opening of $\svavg$ at $\tau$ plus $m$ openings of $\subphi{i}$ at $\ptlow$; the input-side reconstruction is symmetric, binding $\MLExtrain$ to $\{\cm{\subD{i}}\}$. Together the two reconstructions yield the binding chain
\[
\begin{aligned}
&D_i \to \cm{\subD{i}} \to \cm{\MLExtrain} \to \cdots \\
&\qquad \to \cm{\svavg} \to \cm{\subphi{i}} \to \svavg_{|_{\indexset{i}}}.
\end{aligned}
\]
Each provider then runs three local checks (input binding, output binding, transcript correctness) on the per-provider slice they verify, deferred to \appref{app:op7-full}.

\textbf{Optional layer for high-dimensional inputs.} When the raw feature dimension is too large to commit directly under the polynomial-commitment scheme, we adopt a verifiable dimensionality-reduction (PCA) sumcheck from prior work~\cite{zhang2025fairzk} as a preprocessing step that projects the features into a lower-dimensional space before hashing. The construction commits the necessary auxiliary witnesses and discharges the projection identity through additional sumchecks; we use it as a black box and defer details to \appref{app:pseudocode}.

\begin{algorithm}[t]
\caption{\zkls{} (high-level). \ifextended The full per-module protocol is Algorithm~\ref{alg:lsh-bucket-full} in Appendix~\ref{app:pseudocode}.\else The full per-module protocol is given in Appendix~\cite{extendedversion}.\fi}
\label{alg:lsh-bucket}
\begin{algorithmic}[1]
\Statex \textbf{Public input:} public parameters $\pp$, projection vectors $\{r_\ell^{(k)}\}$, per-provider data commitments $\{\cm{\subD{i}}\}$, validation-set commitment $\cm{\subD{\mathrm{val}}}$, hash parameters $(\Ntables, \Kdepth)$.
\Statex \textbf{Witness:} per-provider data shards $\{\subD{i}\}$, validation set $\subD{\mathrm{val}}$; bucket/weight/label oracles (see Table~\ref{tab:plaintext-to-zk-mapping}).
\Statex \textbf{Output:} proof $\pi$ certifying $\svavg = \mathsf{LSH\text{-}Shapley}(\subD{}, \subD{\mathrm{val}})$.
\Statex
\State \textbf{Commit witnesses.} Prover commits to feature MLEs, hashing/weight oracles, bucket histograms, label indicators, and aggregates (plus PCA witnesses if $d > \text{PCS budget}$).
\State \textbf{Hash binding.} Run the hash-binding sumchecks that tie sign and dot-product oracles to the committed feature MLEs; if PCA is enabled, additionally run the ZK-PCA sumchecks.
\State \textbf{Histogram and lookup consistency.} Run sumchecks binding bucket counts $(\cnt, \cnttr, \cntte)$ to the bucketassignments and lookups $(\mhat, \mchat, \tchat)$ to those histograms.
\State \textbf{Weight and aggregation.} Run the weight-identity sumchecks binding $W$ to the closed form~\eqref{eq:phi-pm} on both the edge and interior cases, then the table-aggregation and normalization sumchecks producing $\svsum$ and $\svavg$.
\State \textbf{Reconstruction and opening.} Run the input-side and output-side reconstruction sumchecks binding $\MLExtrain$ and $\svavg$ to per-provider commitments, then send PCS openings at every sumcheck challenge point.
\end{algorithmic}
\end{algorithm}

\subsubsection{Security and Complexity}
\label{sec:method:bucket:soundness}

This subsection states the security guarantees and prover-cost profile of \zkls{}, with formal proofs and per-module breakdowns deferred to \appref{app:soundness-chain}.

\paragraph{Security.} \zkls{} is a non-interactive zero-knowledge argument of knowledge for the closed-form score relation: it satisfies completeness (an honest prover always convinces the verifier), knowledge soundness (the verifier accepts only when the prover knows a witness consistent with the closed-form score formula~\eqref{eq:phi-pm}), and zero-knowledge (the proof reveals nothing beyond the published $\svavg$). 

\paragraph{Complexity.} \phantomsection\label{sec:method:bucket:complexity}
The hashing module incurs a one-shot input-binding cost of $O(\Ntables \cdot \Kdepth \cdot (\Ntrain + \Ntest) \cdot d)$ and is compute-dominant due to sign-check overhead, with $d$ and $\Ntest$ entering only here and dropping out of all downstream layers. Subsequent Shapley layers are led by histogram-consistency sumchecks at $O(\Ntables \cdot \Ntrain \cdot \mathrm{poly}(\Kdepth))$, with lookup, weight, and aggregation costs lower order; Layers~2 and~3 instantiate one sumcheck instance per class, contributing a multiplicative $\Nclass$ factor on those layers' transcript. Proof size therefore splits into a transcript component linear in $\Nclass$ on Layers~2 and~3, plus a PCS-opening component governed by the total length of the $\Nclass$-indexed tensor and batched across classes (\S\ref{sec:opt:super-oracle}).

\section{Proof-System Optimizations}
\label{sec:optimizations}

The \zkls{} protocol of \S\ref{sec:method} is correct but
unoptimized along two axes: proof size and prover time. This
section presents two optimizations that target them: super-oracle
batching (\S\ref{sec:opt:super-oracle}) and sparsity-aware
sumchecks (\S\ref{sec:opt:sparsity}).

\subsection{Super-Oracle Batching for Proof Size}
\label{sec:opt:super-oracle}

\parh{The opening-count bottleneck.} Under our PCS
(\S\ref{sec:bg:zkp}), if $Q$
denotes the number of PCS openings and $c_{\mathrm{fix}}$ the
per-opening fixed-overhead bytes (commitment header,
authentication path, transcript material), proof size scales as
\[
  |\pi_{\mathrm{PCS}}| \;=\; Q \cdot \bigl(c_{\mathrm{fix}} + \sqrt{\Nbuckets}\bigr),
\]
so the opening count $Q$ is the dominant lever, and each opening
also costs the prover one PCS open phase. The witnesses
committed in \S\ref{sec:method:bucket:witnesses} are indexed
tensors, each queried by the sumcheck that consumes it (its
\emph{parent sumcheck}) at one or more evaluation points. Most
parents query a single point and contribute a single opening;
the cost-bearing exceptions are tensors whose parent issues
\emph{multiple} queries at one shared upstream point, differing
only along a single index axis. We call such an indexed slice
group a \emph{family}, and three of them dominate $Q$ in
\zkls{}:
\begin{itemize}[leftmargin=1.5em,topsep=2pt,itemsep=1pt]
  \item the \emph{sign-bit family} $\signbit[\ell, k, \cdot]$,
  driven by the dot-product sign sumchecks of
  \S\ref{sec:method:bucket:sumchecks}, which probe $\Kdepth$
  points differing only in the hash-function index $k$ on each of
  the training-side and test-side preambles;
  \item the \emph{histogram family} $\cnttr[\ell, b, c]$, driven
  by the per-class training-histogram sumcheck together with the
  bucket-size lookup, which together probe $2\Nclass + 2$ points
  differing only in the class coordinate $c$;
  \item the \emph{lookup family}
  $\{\mchat[\ell, i, c], \tchat[\ell, i, c]\}$, driven by the
  same-class and validation-class lookup sumchecks, which each
  probe $\Nclass$ points differing only in $c$.
\end{itemize}
In every case the queries share a common upstream point
$x^* \in \EF^{n}$ (the random suffix of verifier challenges
already fixed by the parent sumcheck) and differ only in the
index variable that one prepends to $x^*$. This is exactly the
structure batching exploits.

\parh{Construction.} Let $f_0, f_1, \ldots, f_{Q-1} : \Bool^{n} \to \Fld$
be $Q$ multilinear \emph{sub-oracles}, each obtained by fixing one
index axis of a parent tensor (e.g.,
$f_c(\ell, b) := \cnttr[\ell, b, c]$ for $c \in \{0, \ldots, \Nclass - 1\}$).
Suppose the upstream sumcheck has fixed a common evaluation point
$x^{\star} \in \EF^{n}$ and the verifier needs the $Q$ values
$f_0(x^{\star}), \ldots, f_{Q-1}(x^{\star})$ to close its claim;
without batching, each $f_j$ carries its own PCS commitment and
opening at $x^{\star}$, costing
$Q \cdot (c_{\mathrm{fix}} + \sqrt{\Nbuckets'})$ proof bytes and
$Q$ separate open-phase invocations on the prover. We collapse the
$Q$ commitments to one by padding $Q$ to a power of two and
reshaping the family $\{f_j\}$ into a single multilinear extension
on $\Bool^{\log Q + n}$ that treats the index $j$ as an extra
variable:
\[
  F(j, x) \;:=\; f_j(x), \quad
  \text{extended multilinearly to } (j, x) \in \EF^{\log Q + n}.
\]
We call $F$ the \emph{super-oracle} of the family $\{f_j\}$. The
prover commits $F$ once. To recover the $Q$ values
$\{f_j(x^{\star})\}$, the verifier samples one Fiat--Shamir
challenge $r \in \EF^{\log Q}$, asks for the single PCS opening
$F(r, x^{\star})$, and accepts the prover-supplied values
$v_0, \ldots, v_{Q-1}$ iff
\begin{equation}\label{eq:super-oracle-rlc}
  F(r, x^{\star}) \;=\; \sum_{j \in \Bool^{\log Q}}
  \eq(r, j) \cdot v_j.
\end{equation}
By Schwartz--Zippel over $\EF$, agreement at random $r$ binds
$v_j = f_j(x^{\star})$ for every $j$ except with probability at
most $\log Q / |\EF|$. The verifier then plugs the recovered
$\{v_j\}$ into the parent sumcheck identity exactly as in the
unbatched protocol; nothing downstream changes.

\parh{Application to the three families.} The sign-bit,
histogram, and lookup families introduced above all satisfy the
setup of~\eqref{eq:super-oracle-rlc}: each is a slice of its parent
tensor along a single integer-encoded axis ($k$ for $\signbit$; $c$
for $\mchat$, $\tchat$, and $\cnttr$), and within each family the
$Q$ queries share a common upstream $x^{\star}$ already fixed by
the parent sumcheck. The same construction therefore applies
verbatim to all three. We pack them into \emph{three separate}
super-oracles rather than one: although the template is shared,
the parent tensors have different shapes and queries in different
families land at different $x^{\star}$, so merging the commitments
would leave the post-batching opening count unchanged.

\parh{Savings.} For one super-oracle of $Q$ sub-oracles, the $Q$
separate openings of the unbatched template collapse to one in the
PCS-opening component of the proof:
\[
  \text{before:}\;\; Q \cdot \bigl(c_{\mathrm{fix}} + \sqrt{\Nbuckets'}\bigr)
  \quad\longrightarrow\quad
  \text{after:}\;\; c_{\mathrm{fix}} + \sqrt{Q \cdot \Nbuckets'},
\]
and the prover runs one open phase instead of $Q$, cutting prover
time in proportion. The savings apply to PCS-opening bytes only;
the parent sumchecks that consume the sub-oracles are unaffected,
so any sumcheck transcript whose instance count scales with the
batched index (e.g., per-class lookup and weight-identity sumchecks
over the class axis $c$) retains its linear-in-$Q$ transcript
contribution. Batching is profitable only when $Q$ is large
enough to amortize $c_{\mathrm{fix}}$ and when $\sqrt{\Nbuckets'}$
exceeds the per-opening authentication-path bytes
(see \S\ref{sec:bg:zkp}); families that fall short of either
threshold are left unbatched. We report the per-family breakdown
in \S\ref{sec:eval:ablation}.

\parh{Soundness.} Replacing $Q$ separate openings with a single
packed opening preserves soundness: the random index challenge
$r$ binds the prover to consistent sub-oracle values
via~\eqref{eq:super-oracle-rlc}, and the verifier's downstream
check is unchanged. \Appref{app:super-oracle-soundness} gives the full proof.

\subsection{Sparsity-Aware Sumchecks for Prover Time}
\label{sec:opt:sparsity}

\parh{The dense-grid bottleneck.} The complexity analysis
of \S\ref{sec:method:bucket:complexity} identifies the three
histogram sumchecks of Layer~1 (training-histogram, per-class
training-histogram, per-class validation-histogram) as the
prover-time bottleneck. Each runs $\Kdepth$ rounds, and in every
round the prover walks the entire bucket grid of size
$\Nbuckets = 2^{\Kdepth}$ to assemble its univariate round
message: roughly $\Ntables \cdot \Nbuckets$ field multiplications
per round per sumcheck, repeated across all three sumchecks. The
optimization below brings this cost down to a function of the
data size rather than of the bucket-grid size.

\parh{Why most of that work is wasted.} Every multilinear
extension committed in a sumcheck lives on a power-of-two cube of
size $2^n$, and the $\Kdepth$-round histogram sumchecks of
\S\ref{sec:method:bucket:sumchecks} are no exception. What is
special here is how little of that cube the data actually
occupies: LSH places the $\Ntrain$ training points into at most
$\Ntrain$ distinct cells of $\cnt[\ell, b]$ out of
$\Nbuckets = 2^{\Kdepth}$, and class-stratified factors
($\cnttr[\ell, b, c]$, $\cntte[\ell, b]$, and the per-class label
and lookup slices) inherit the same bucket sparsity plus a
further partition across $\Nclass$ classes. The remaining cells
are zero by construction and contribute nothing to any round
message, yet the dense walk multiplies and accumulates them
anyway. The same observation underpins prior sparsity-aware sumcheck
designs for ZK machine
learning~\cite{li2024sparsity,qu2025zkgpt}, where the zeros
come from rounding arbitrary matrix dimensions up to powers of
two; in our setting the zeros come from hashing and class slicing
instead, but the prover-side mechanism is the same.

\parh{Sparse-aware sumcheck path.} We expose the skip in the
prover by attaching a \emph{support mask}
$\supp(f) \subseteq \Bool^{\log \Nbuckets}$ to each sparse factor
$f \in \{\cnttr, \cntte, \text{per-class label and lookup slices}\}$:
the index set of cells the prover knows are nonzero. To assemble
round $i$'s message, the prover (i) intersects the masks of the
factors that appear in each product term, (ii) iterates the round
message's outer loop only over that intersection, and (iii) folds
each mask in place once the round advances, keeping it one
variable ahead of the partially-folded table. By construction, the
round message emitted on the sparse path equals the dense-path
message bit for bit, so verifier-side soundness transfers
verbatim. A factor whose post-fold density exceeds a fixed
crossover threshold (set so that mask bookkeeping no longer pays
for the zeros it skips) drops back to the dense path for its
remaining rounds; on workloads with high bucket occupancy
$\lambda = \Ntrain / \Nbuckets$, $\cnttr$ itself crosses early and
the savings come only from the sparser per-class factors. Full
pseudocode and the support-superset invariant are in \appref{app:sparsity-aware}.

\parh{Savings.} Unlike super-oracle batching, whose savings
depend only on $Q$, the sparsity-aware path has no fixed
complexity: every multiplication it skips corresponds to a zero
cell of the underlying tensor, and the fraction of zero cells is
set by the workload (bucket occupancy
$\lambda = \Ntrain / \Nbuckets$, class count $\Nclass$, and how
unevenly LSH distributes points across buckets and classes), not
by the protocol. On workloads where the histogram itself stays
dense, the gain comes only from the sparser per-class slices; on
workloads with low bucket occupancy, the histogram contributes
too. We report the per-dataset breakdown in
\S\ref{sec:eval:ablation}.

\parh{Soundness.} The cells the prover skips are exactly the zero
cells of the underlying tensor, and each such cell contributes $0$
to the round-message sum; the message the prover emits on the
sparse path is therefore identical to the dense-path message by
direct equality. Soundness inherits from the unmodified \zkls{}
protocol. \Appref{app:sparsity-aware} gives the full argument.

\providecommand{\TBDPlaceholder}{\textcolor{red}{TBD}}
\providecommand{\TBD}{\TBDPlaceholder}
\providecommand{\knnzk}{\mbox{KNN-Shapley-ZK}\xspace}
\providecommand{\gshapley}{\mbox{G-Shapley}\xspace}

\section{Evaluation}
\label{sec:evaluation}
We aim to answer three research questions (RQs) about \tool.
\begin{itemize}[leftmargin=1.5em,itemsep=2pt]
    \item \textbf{RQ1: Capability.} Does \tool deliver verifiable
    Shapley attributions on real marketplace data while preserving the
    underlying valuation quality?
    \item \textbf{RQ2: Scalability.} How does each cost dimension
    scale as the workload is pushed to modern ML scale along
    $\Ntrain$, $d$, $\Nclass$, and $\Ntest$?
    \item \textbf{RQ3: Ablation.} How much does each component of the
    construction contribute to the
    end-to-end cost?
\end{itemize}

\subsection{Experimental Setup}
\label{sec:eval:setup}

\parh{Datasets.}~We conduct our experiments on the 12-dataset
benchmark established in prior data-valuation
work~\cite{wang2023privacy,liu2026bridging}, which covers the two
data regimes a production marketplace typically trades in: 10 OpenML tabular benchmarks
covering raw feature dimensions $d \in [5, 170]$, and 2 ViT
embeddings on \texttt{mnist} and \texttt{cifar10} at $d = 768$,
standing in for marketplaces that trade pre-processed representations
of rich media. The scalability experiments in
\S\ref{sec:eval:scalability} and the ablation in
\S\ref{sec:eval:ablation} switch to synthetic Gaussian workloads with
features drawn i.i.d.\ from $\mathcal{N}(0, I_{d})$ and class
labels assigned by nearest-cosine to $\Nclass$ random unit anchors,
so we can vary one axis at a time. Pre-processing and synthetic-data
generation details are given in \appref{app:datasets}.

\parh{Baselines.}~We compare \tool against six baselines, organized
in two groups. \emph{Valuation-quality baselines:} (1)
KNN-Shapley~\cite{10.14778/3342263.3342637}, a closed-form
KNN-based Shapley estimator with $\mathcal{O}(\Ntrain \log \Ntrain)$
per-validation cost; (2) \gshapley~\cite{ghorbani2019data}, the
gradient-based Shapley estimator that ZK-DV adopts as its valuation
primitive; (3) Data Shapley~\cite{ghorbani2019data}, a truncated
Monte-Carlo Shapley estimator over a fixed predictor; and (4) Data
Banzhaf~\cite{wang2023data}, the Banzhaf-value variant. \emph{ZK baselines:} (5)
\knnzk{}, a direct SNARK compilation of KNN-Shapley through the same
sumcheck/$\PCS$ pipeline as \tool; and (6) \zklsnaive{}, the per-pair
tensor encoding from \S\ref{sec:method:bucket}. We also
attempted to include SP1~\cite{sp1}, a state-of-the-art
general-purpose zkVM, but it failed to complete a single proof on the
smallest dataset in the benchmark; we report the run alongside the
ZK-DV comparison in \S\ref{sec:eval:end_to_end}.

\parh{Tasks and Metrics.}~Following prior work~\cite{wang2023privacy,
ghorbani2019data}, we evaluate two standard data-valuation tasks:
\emph{mislabeled} data detection and \emph{noisy} data detection. Of
each dataset, we hold out $10\%$ as the validation cohort and train
on the remaining $90\%$, of which a further $10\%$ is corrupted
(label flip for mislabel, feature perturbation for noisy); a faithful
estimator should rank these corrupted points lowest. We report two
data-valuation metrics: AUROC for valuation quality, and valuation
runtime for throughput. For \tool and the two ZK baselines,
we additionally
report proving time, verification time, and proof size measured on
the same hardware; \texttt{OOM} marks runs that exceeded the host
memory budget. Every reported number is the median of three
frozen-seed runs.

\parh{Implementation.}~\tool is implemented in Rust (36K LOC) over the
Goldilocks base field ($p = \goldilocks$) with a Brakedown polynomial
commitment scheme~\cite{golovnev2023brakedown}. All experiments run
on a single 64-core Intel-class server with 256\,GB RAM and Ubuntu
Linux.

\subsection{RQ1: End-to-End Capability on Real-World Data}
\label{sec:eval:end_to_end}

\begin{table*}[!t]
    \caption{Valuation quality on the 12-dataset benchmark. AUROC
    reported for both mislabel and noisy detection.}
    \label{tab:plaintext_quality}
    \centering
    \resizebox{0.8\textwidth}{!}{
    \begin{tabular}{l rr ccccc ccccc}
        \toprule
        & & & \multicolumn{5}{c}{Mislabel AUROC} & \multicolumn{5}{c}{Noisy AUROC} \\
        \cmidrule(lr){4-8} \cmidrule(lr){9-13}
        Dataset & $\Ntrain$ & $d$
        & \gshapley & DS & DB & KNN & LSH-Shapley
        & \gshapley & DS & DB & KNN & LSH-Shapley \\
        \midrule
        \TableOneAUROCRows
        \bottomrule
    \end{tabular}
    }
\end{table*}

\noindent We assess \tool on two facets in turn: first, valuation
quality on the 12-dataset benchmark; second, end-to-end proving cost
against the two ZK baselines on the same datasets.

\parh{Valuation Quality and Throughput.}
Tables~\ref{tab:plaintext_quality} and~\ref{tab:plaintext_runtime}
jointly report per-method AUROC and valuation runtime on the
12-dataset benchmark. \tool consistently outperforms baselines such as G-Shapley, Data Shapley, and Data Banzhaf across most settings, particularly in noisy detection at high feature dimensions. It remains comparable to KNN-Shapley within
$|\Delta\mathrm{AUROC}| \leq \rqOneAbsDeltaAurocMax$ on every
dataset and on both tasks, confirming that the bucket-counting
reformulation preserves the underlying valuation primitive. On throughput, \tool runs in $0.11$--$5.93$\,s
per validation cohort across all twelve datasets, whereas
KNN-Shapley itself takes $4.4$--$427$\,s. The gap widens at high
feature dimension because KNN-Shapley scales with
$\Ntrain \cdot \Ntest \cdot d$, whereas \tool's per-class bucket histograms
decouple the complexity to $(\Ntrain + \Ntest) \cdot d$. Data Shapley and
Data Banzhaf time out on the larger rows of
Table~\ref{tab:plaintext_runtime}: their cost is dominated by an
$\mathcal{O}(M \cdot \Ntrain)$ retraining loop ($M$ Monte-Carlo
permutations or coalition samples, each requiring a fresh model fit),
consistent with prior
work~\cite{wang2023privacy,ghorbani2019data,wang2023data} that reports
these estimators only at thousand-scale $\Ntrain$. \tool is
therefore the only estimator in Table~\ref{tab:plaintext_quality} that
simultaneously reaches the KNN-Shapley quality frontier and the
\gshapley-class throughput band.

\parh{End-to-End Proving Cost.}
Table~\ref{tab:zkp_cost} reports proving time $P$, verification time
$V$, and proof size $|\pi|$ for \tool, \knnzk{}, and \zklsnaive{}
across the 12-dataset benchmark. Against \knnzk{}, \tool delivers a uniform
proving-time speedup on every reached dataset, peaking at
$\rqOneBestSpeedup\times$ on \texttt{\rqOneBestSpeedupDs}. Against
\zklsnaive{}, \tool delivers
$\rqOneOrigMinSpeedup\times$--$\rqOneOrigPeakSpeedup\times$
proving-time speedups, with the verifier under
$\rqOneVerifySMax$\,s on every dataset and the proof size shrinking
in lockstep with the prover-side reduction. Where the workload
crosses out of the small regime, the comparison stops being a speed
question and becomes a feasibility question: \knnzk{} runs out of
host memory on $\rqOneKnnOomCount$ of $\rqOneKnnOomTotal$ datasets,
and \zklsnaive{} runs out on $\rqOneOrigOomCount$ datasets
(\rqOneOrigOomList), all of which are the high-$d$ OpenML and
ViT-embedding rows that a production marketplace would actually
ship. \tool is the only system in Table~\ref{tab:zkp_cost} that
verifies every one of these large-scale datasets, in
$\rqOneOpenMLProveSLo$--$\rqOneOpenMLProveSHi$\,s on the ten OpenML
rows and $\rqOneViTProveSLo$--$\rqOneViTProveSHi$\,s on the two ViT
rows, with verification under $\rqOneVerifySMax$\,s holding flat
throughout. The end-to-end picture is therefore that \tool is
strictly better than every baseline wherever a head-to-head
comparison is possible, and is the only system that reaches the
ML-embedding-scale workloads at all.

\parh{Comparison against ZK-DV and SP1.}~Direct benchmarking is currently unfeasible for two alternative ZK frameworks: ZK-DV, whose source
code is not yet open-sourced, and SP1, a general-purpose zkVM. ZK-DV~\cite{liu2026bridging} reports a maximum configuration of
$\Ntrain = 256$ on MNIST under a LeNet-5 G-Shapley substrate, on
which their optimal prover takes $\approx 2992$\,s per seller. In contrast, on \texttt{mnist}  with a ViT at
$\Ntrain = 60{,}000$ ($234\times$ larger),
\tool generates a proof in $\rqOneViTProveSLo$\,s, achieving a two-to-three order-of-magnitude speedup despite the larger workload. For SP1, we compiled LSH-Shapley as standard
Rust inside within its execution environment; on the smallest dataset
($\Ntrain = 5404$, $d = 5$) it consumed more than $87$\,GB of
resident memory and failed to complete a single proof in over one
hour. This is consistent with the structural mismatch between
general-purpose zkVMs, whose cycle counts scale with program
execution, and the structured workload LSH-Shapley targets.

\textbf{Takeaway for RQ1.} \tool is the only evaluated system to successfully complete the full benchmark suite, maintaining high fidelity with an AUROC within $\rqOneAbsDeltaAurocMax$ of the baseline. By achieving seconds-range proving times for both OpenML and ViT embeddings, \tool demonstrates the practical feasibility of providing scalable, high-accuracy guarantees for real-world data marketplaces.

\begin{table}[t]
    \caption{Valuation runtime on the 12-dataset benchmark.}
    \label{tab:plaintext_runtime}
    \centering
    \small
    \begin{tabular}{l rrrrr}
        \toprule
        & \multicolumn{5}{c}{Valuation Runtime (s)} \\
        \cmidrule(lr){2-6}
        Dataset & \gshapley & DS & DB & KNN & LSH-Shapley \\
        \midrule
        \TableOneTimeRows
        \bottomrule
    \end{tabular}
\end{table}

\begin{table*}[t]
    \caption{End-to-end ZK cost on the 12-dataset benchmark. $P$,
    $V$, $|\pi|$ are proving time, verification time, and proof size.}
    \label{tab:zkp_cost}
    \centering
    \small
    \begin{tabular}{lrr rrr rrr rrr}
        \toprule
        & & & \multicolumn{3}{c}{\knnzk} & \multicolumn{3}{c}{\zklsnaive{}} & \multicolumn{3}{c}{\tool} \\
        \cmidrule(lr){4-6} \cmidrule(lr){7-9} \cmidrule(lr){10-12}
        Dataset & $\Ntrain$ & $d$ & $P$ (s) & $V$ (s) & $|\pi|$ (MB) & $P$ (s) & $V$ (s) & $|\pi|$ (MB) & $P$ (s) & $V$ (s) & $|\pi|$ (MB) \\
        \midrule
        \TableOneZKPRows
        \bottomrule
    \end{tabular}
\end{table*}

\subsection{RQ2: Scalability}
\label{sec:eval:scalability}

\noindent We evaluate \tool along the four workload axes the
\S\ref{sec:method:bucket:complexity} cost statement covers: the
training-set size $\Ntrain$, the feature dimension $d$, the number
of classes $\Nclass$, and the validation-set size $\Ntest$. We
anchor the experiments at a reference configuration of
$\Ntrain = 16{,}384$, $d = 8$, $\Nclass = 10$, $\Ntest = 64$, and
vary one axis at a time around it. Figure~\ref{fig:scalability}
plots proving time, verification time, and proof size against each
axis.

\parh{Effect of $\Ntrain$.}
The $\Ntrain$ panel of Figure~\ref{fig:scalability} sweeps
$\Ntrain \in \{2^{4}, 2^{6}, \dots, 2^{16}\}$ at the reference
$d = 8, \Nclass = 10$. Prove time scales linearly with the dataset size for larger $\Ntrain$, reaching
$\rqTwoNTopProveS$,s at $\Ntrain = \rqTwoNTopN$. At smaller values of $\Ntrain$, the
fixed cost of the per-oracle $\PCS$ commit/open dominates, but this amortizes once
the witness population crosses the per-oracle threshold. The overall linear scaling
matches the histogram-consistency cost
$\mathcal{O}(\Ntables \cdot \Ntrain \cdot \mathrm{poly}(\Kdepth))$
of \S\ref{sec:method:bucket:complexity}.

\parh{Effect of $d$.}
\label{sec:eval:scalability:d}
The $d$ panel of \F~\ref{fig:scalability} sweeps
$d \in \{4, 16, 64, 256, 1024, 4096\}$ at the reference
$\Ntrain = 16{,}384, \Nclass = 10$. Prove time stays in the
$\rqTwoDLowProveS$\,s to $\rqTwoDHighProveS$\,s band across four
orders of magnitude in $d$, with a discrete step at $d = 64$ where
the optional PCA projection activates and adds a fixed verifiable
dimensionality-reduction stage. At $d = 4096$, \tool generates a proof in $\rqTwoDHighProveS$\,s. The pattern matches
\S\ref{sec:method:bucket:complexity}: the Shapley layers are
$d$-independent, and $d$ enters only through the hashing input and
the optional PCA projection.

\parh{Effect of $\Ntest$.}
The $\Ntest$ panel of Figure~\ref{fig:scalability} sweeps
$\Ntest \in \{32, 128, 512, 2048, 4096\}$ at the reference
$\Ntrain = 16{,}384, d = 8, \Nclass = 10$. Prove time is flat
within $\rqTwoNtestSpreadPct\%$ across the range. This follows
\S\ref{sec:method:bucket:complexity}: the histogram and weight
stages are $\Ntest$-independent, so an operator can extend the
validation set at no prover-side cost. A residual $\log \Ntest$
slope appears only in the verify-time row.

\parh{Effect of $\Nclass$.}
The $\Nclass$ panel of Figure~\ref{fig:scalability} sweeps
$\Nclass \in \{2, 5, 10, 50, 100\}$ at the reference
$\Ntrain = 16{,}384, d = 8$. At $\Nclass = \rqTwoCTopC$, the proof is
$\rqTwoCTopProofMB$\,MB and proving time is $\rqTwoCTopProveS$\,s.
The observed scaling matches the two-component cost model of
\S\ref{sec:method:bucket:complexity}: a linear cost from the interactive sumcheck messages (handling per-class lookups and weight identities), and a PCS-opening cost reduced to $O(\sqrt{\Nclass})$ via super-oracle batching. At small $\Nclass$ the proof is
floored by $\Nclass$-independent costs (the hashing module and the
fixed-overhead PCS commitments), so growth is shallow; as $\Nclass$
rises, the linear transcript term takes over and scaling approaches
linear from below. Prove time follows the same regime transition.

\begin{figure*}[t]
    \centering

    \begin{subcaptionblock}{0.49\linewidth}
        \centering
        \includegraphics[width=\linewidth]{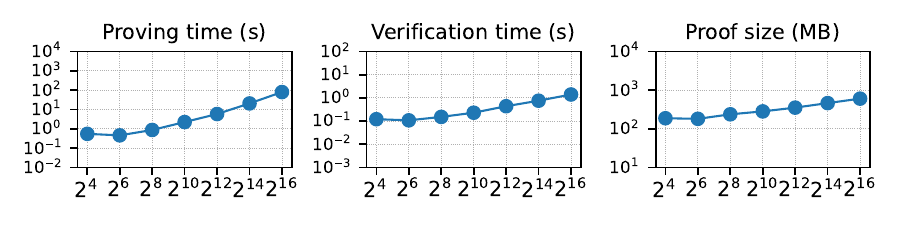}
        \vspace{-2em}
        \caption{Effect of $\Ntrain$ at $d = 8$, $\Nclass = 10$. }
        \label{fig:scalability:n}
    \end{subcaptionblock}
    \hfill
    \begin{subcaptionblock}{0.49\linewidth}
        \centering
        \includegraphics[width=\linewidth]{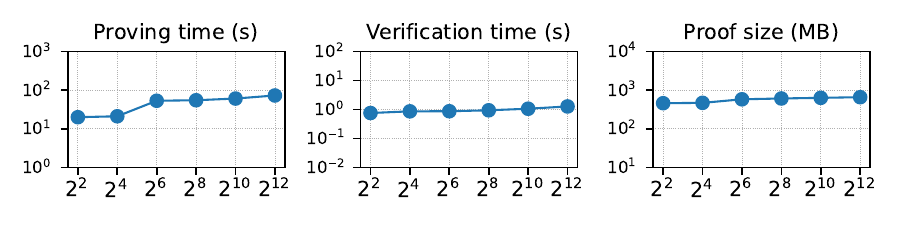}
        \vspace{-2em}
        \caption{Effect of $d$ at $\Ntrain = 16{,}384$, $\Nclass = 10$.}
        \label{fig:scalability:d}
    \end{subcaptionblock}
    \\[0.3ex]
    \begin{subcaptionblock}{0.49\linewidth}
        \centering
        \includegraphics[width=\linewidth]{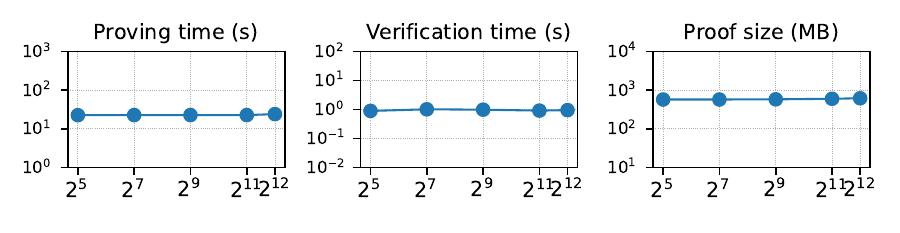}
        \vspace{-2em}
        \caption{Effect of $\Ntest$ at $\Ntrain = 16{,}384$, $d = 8$, $\Nclass = 10$.}
        \label{fig:scalability:t}
    \end{subcaptionblock}
    \hfill
    \begin{subcaptionblock}{0.49\linewidth}
        \centering
        \includegraphics[width=\linewidth]{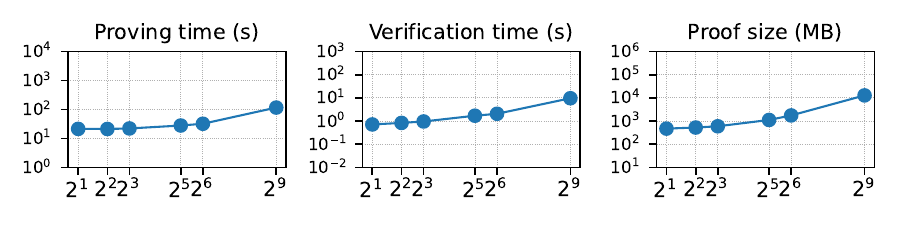}
        \vspace{-2em}
        \caption{Effect of $\Nclass$ at $\Ntrain = 16{,}384$, $d = 8$. }
        \label{fig:scalability:c}
    \end{subcaptionblock}
    \caption{\tool scalability along the four workload axes
    $(d, \Ntrain, \Ntest, \Nclass)$. }
    \label{fig:scalability}
\end{figure*}

\textbf{Takeaway for RQ2.} \tool demonstrates robust scalability, maintaining prover latency within a consistent seconds-to-tens-of-seconds range across modern ML configurations. This stability ensures that \tool's performance remains predictable and practical as system complexity grows.

\subsection{RQ3: Optimization Effectiveness and Module Breakdown}
\label{sec:eval:ablation}

\noindent To isolate the contribution of each optimization \tool
stacks on top of \zklsnaive{}, we run a one-at-a-time knockout
on a synthetic Gaussian workload at $\Ntrain = 10{,}000$, $d =
16$, $\Nclass = 10$.

\parh{Bucket-counting reformulation.}
Table~\ref{tab:ablation} reports the largest single shift on the
proving-time column: replacing the bucket-histogram refactor with the
\zklsnaive{} tensor encoding inflates proving time by
$\rqThreeBucketSpeedup\times$ ($\rqThreeFullProveS \to
\rqThreeOriginalProveS$\,s) at the same workload. The shift is consistent with the cost analysis of
\S\ref{sec:method:bucket:complexity}: the tensor encoding commits the
$\Ntables \cdot \Ntest \cdot \Ntrain$ membership tensor in full,
whereas the bucket refactor commits only the $\Nclass \cdot
\Nbuckets$ per-class histograms that the LSH-Shapley computation
actually consumes, which is what makes the bucket-counting
reformulation the primary structural lever for proving time.

\parh{Super-oracle batching.}
On the proof-size column, Table~\ref{tab:ablation} shows that
removing super-oracle PCS batching inflates the proof from
$\rqThreeFullProofMB$\,MB to $\rqThreeNoSuperOracleProofMB$\,MB
($\rqThreeSuperOracleProofRatio\times$); the corresponding
proving-time inflation is $\rqThreeFullProveS \to
\rqThreeNoSuperOracleProveS$\,s. The pattern reproduces the
$\sqrt{\Nclass}$ scaling rule of \S\ref{sec:method:bucket:complexity}:
opening the per-class oracles individually pushes $\PCS$-open work
to scale linearly with $\Nclass$ instead of with its square root. 

\parh{Sparsity skipping.}
Sparsity skipping cuts end-to-end proving time from
$\rqThreeNoSparsityProveS$\,s to $\rqThreeFullProveS$\,s
($\rqThreeSparsityReductionPct\%$) Mechanically, a support mask
exposes only the nonzero cells of the three Layer-1 histogram
sumchecks (training-histogram, per-class training-histogram, and
per-class validation-histogram), eliminating the dense-grid sweep
over empty buckets.

\parh{Per-module breakdown.}
\F~\ref{fig:module_share} partitions \tool's proving time across the five components detailed in \S\ref{sec:method:bucket}. At the \S\ref{sec:eval:ablation} reference configuration, the hashing module dominates the total cost, driven by the high ZKP overhead of sign-bit and range-proof operations. The four downstream layers collectively account for the remaining $\rqThreeBreakdownLayersPct\%$.

\begin{table}[t]
    \caption{Ablation study}
    \label{tab:ablation}
    \centering
    \small
    \begin{tabular}{lrrr}
        \toprule
        Variant & Proof (MB) & Prove (s) & Verify (s) \\
        \midrule
        Full system (\tool)            & \rqThreeFullProofMB          & \rqThreeFullProveS          & \rqThreeFullVerifyS          \\
        \,w/o super-oracle batching    & \rqThreeNoSuperOracleProofMB & \rqThreeNoSuperOracleProveS & \rqThreeNoSuperOracleVerifyS \\
        \,w/o sparsity skipping        & \rqThreeNoSparsityProofMB    & \rqThreeNoSparsityProveS    & \rqThreeNoSparsityVerifyS    \\
        \,w/o bucket reformulation     & \rqThreeOriginalProofMB      & \rqThreeOriginalProveS      & \rqThreeOriginalVerifyS      \\
        \bottomrule
    \end{tabular}
\end{table}

\begin{figure}[t]
    \centering
    \includegraphics[width=1\linewidth]{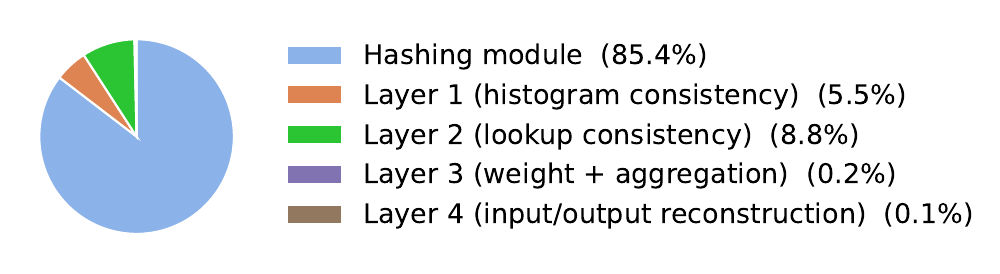}
    \vspace{-2.5em}
    \caption{\tool proving-time decomposition.}
    \label{fig:module_share}
\end{figure}

\textbf{Takeaway for RQ3.} The ablation study confirms that \tool's optimizations are orthogonal and collectively essential for achieving practical performance. The absence of these techniques leads to multi-fold degradation across these independent metrics, demonstrating that \tool's efficiency stems from a comprehensive architectural approach to verifiable valuation.

\section{Related Work}
\label{sec:related-work}

\textbf{Privacy-preserving and verifiable data valuation.}
The Shapley formulation of data valuation has produced a rich
algorithmic
line~\cite{ghorbani2019data,10.14778/3342263.3342637,kwon2022beta,wang2023data,yoon2020data,ghorbani2020distributional,jiang2023opendataval},
all of which assume a trusted valuator with plaintext access to seller
data. A second line addresses the resulting privacy tension via
differentially private estimators~\cite{wang2023privacy} or secure-MPC
prototypes~\cite{tian2022private,peng2025reliable}. While these approaches ensure privacy, they lack verifiability, i.e., the ability for external parties to verify results without accessing raw data.

\textbf{Data marketplaces and pricing.}
The database community has built a layered stack for data and model
markets, from query-based
pricing~\cite{koutris2015query,deep2017qirana} and model-based
pricing for ML~\cite{chen2019towards,liu2021dealer} to
market-platform abstractions and unifying
surveys~\cite{fernandez2020data,pei2023data}, and, more recently,
Shapley-driven pricing exposed to users~\cite{zhu2024dataprice},
auction- and competition-based pricing under federated
learning~\cite{li2024performance,sun2024profit}, and explicit
treatments of marketplace privacy and
security~\cite{alabi2025privacy}. All assume a trusted broker who sees
raw seller data in the clear. \tool removes the trust assumption  by
promoting the operator to a prover whose valuation, pricing, and
selective disclosure are each attested by a succinct proof.

\textbf{Zero-knowledge proof systems for ML and analytics.}
ZK proofs are now practical for non-trivial computations, spanning
CNN/Transformer training and inference~\cite{liu2021zkcnn,weng2021mystique,sun2024zkdl,chen2024zkml,garg2023experimenting,hao2024scalable},
billion-parameter LLMs~\cite{sun2024zkllm,qu2025zkgpt}, model
fairness~\cite{zhang2025fairzk}, ad-hoc
SQL~\cite{li2023zksql,gu2025poneglyphdb}, and causal
analytics~\cite{wang2025privacy}, with general-purpose
zkVMs~\cite{sp1,risc0} as substrate.
Rather than encoding a generic ML or SQL workload, \tool co-designs
a valuation primitive (LSH-Shapley) with a specialized ZK protocol
whose super-oracle batching and sparsity skipping make Shapley-style
data valuation provable end-to-end at marketplace scale.

\section{Discussion and Future Work}
\label{sec:future_work}

\parh{Privacy-enhancing composition.}~\tool is designed to compose orthogonally with the privacy-preserving mechanisms discussed in \S\ref{sec:related-work}. For instance, the noise calibration techniques from DP-TKNN-Shapley~\cite{wang2023privacy} can be combined with \tool to provide a verifiable defense against membership inference attacks. This allows auditors to verify that release-time noise was applied honestly without revealing the underlying scores.

\parh{Beyond LSH-Shapley.}~LSH-Shapley is a Shapley-family proxy that
inherits the faithfulness assumption of the KNN-Shapley line. Tighter
proxies --- gradient-Shapley, influence-function approximations, or
coalition-sketch estimators with sharper concentration --- remain an open
structural target, provided their final-result correctness can be verified
without replaying data-dependent search; the same requirements extend to
regression and generative utilities beyond the classification scope of
\S\ref{sec:system}.

\section{Conclusion}
\label{sec:conclusion}

We presented \tool, a practical zero-knowledge system that resolves the fundamental conflict between privacy and verifiability in data marketplaces. By co-designing the LSH-Shapley valuation primitive with a specialized, highly optimized ZKP protocol, we bridged the scalability gap that previously made verifiable valuation infeasible at modern machine learning scales. Our evaluation confirms that \tool maintains baseline valuation fidelity while reducing proving times to seconds or minutes, producing megabyte-sized proofs verifiable in seconds. Ultimately, \tool provides a deployable, auditable foundation for transparent cross-organizational data trading.

\bibliographystyle{ACM-Reference-Format}
\bibliography{refs}

\clearpage
\appendix

\section{Datasets and Pre-processing}
\label{app:datasets}

\parh{Real-world panel.}~The 12-dataset panel comprises ten OpenML
tabular benchmarks (\texttt{phoneme}, \texttt{2dplanes},
\texttt{click}, \texttt{wind}, \texttt{cpu}, \texttt{creditcard},
\texttt{fraud}, \texttt{pol}, \texttt{vehicle}, \texttt{apsfail}; raw
feature dimensions $\draw \in [5, 170]$) and two ViT embedding
datasets on \texttt{mnist} and \texttt{cifar10} ($\draw = 768$). Every
row passes through the same pre-processing pipeline: (i) class-
balanced sub-sampling down to the largest power-of-two cardinality
not exceeding the configured budget for $\Ntrain$ and $\Ntest$, (ii)
$z$-score normalization, (iii) $\ell_2$ normalization. The
power-of-two constraint is required so that the multilinear extensions used inside
the proof system live on Boolean hypercubes of matched size.
Datasets whose raw dimension exceeds the hashing budget enter the
verifiable PCA preamble described in
\S\ref{sec:method:bucket}.

\parh{Synthetic Gaussian workloads.}~The scalability sweeps in
\S\ref{sec:eval:scalability} and the ablation in
\S\ref{sec:eval:ablation} use a synthetic Gaussian generator that
gives full control over $\Ntrain$, $\draw$, $\Nclass$, and $\Ntest$.
Concretely: features $x_i \in \mathbb{R}^{\draw}$ are drawn i.i.d.\
from $\mathcal{N}(0, I_{\draw})$; we sample $\Nclass$ random unit
anchors $a_c \in \mathbb{R}^{\draw}$ and assign each point the label
$\arg\max_c \langle x_i / \|x_i\|, a_c \rangle$ (nearest-cosine
assignment); the resulting features and labels are split into
$(x_{\mathrm{train}}, y_{\mathrm{train}})$ and
$(x_{\mathrm{test}}, y_{\mathrm{test}})$ at the configured sizes,
each row $\ell_2$-normalized to match the LSH preprocessing
pipeline. The non-power-of-two cases used in the ablation reference
are zero-padded up to the next power of two, which is the natural
input format for the per-class label-indicator sparsity that the M2
fallback path of \S\ref{sec:opt:sparsity} exploits.

\section{Algorithm Pseudocode and Extended Derivations}
\label{app:extended}

\subsection{\zkls{} Prove/Verify Pseudocode}
\label{app:pseudocode}

\paragraph{Committed oracles.} Table~\ref{tab:oracle-inventory} catalogs every multilinear oracle the prover commits in Phase~1 of Algorithm~\ref{alg:lshbucket}: $13 + 2\Nclass$ bucket-specific oracles ($17$ at $\Nclass = 2$), the two hashing-preamble oracles ($\signbit$ and $\mathsf{abs\_dp}$), and, conditionally, two PCA oracles when $\draw$ exceeds the hashing budget. Each is committed as a multilinear extension in the sense of \S\ref{sec:bg:zkp}, so the verifier interacts with it only through PCS openings at Fiat--Shamir-sampled points.

\begin{table}[t]
\small
\caption{Committed oracles of \zkls{} ($\Nclass = 2$). ``Family'' column: H = histogram, L = lookup, W = weight, A = advice, Y = label indicator, P = preamble, PCA = conditional.}
\label{tab:oracle-inventory}
\begin{tabular}{@{}lll@{}}
\toprule
\textbf{Oracle} & \textbf{Domain} & \textbf{Family} \\
\midrule
$\cnt[\ell, b]$ & $\Ntables \cdot \Nbuckets$ & H \\
$\cnttr[\ell, b, c]$ & $\Ntables \cdot \Nbuckets \cdot \Nclass$ & H \\
$\cntte[\ell, b, c]$ & $\Ntables \cdot \Nbuckets \cdot \Nclass$ & H \\
$\mhat[\ell, i]$ & $\Ntables \cdot \Ntrain$ & L \\
$\mchat[\ell, i]$ & $\Ntables \cdot \Ntrain$ & L \\
$\tchat[\ell, i]$ & $\Ntables \cdot \Ntrain$ & L \\
$\harm{M}[\cdot]$ & $\Ntables \cdot \Ntrain$ & L (self-verified) \\
$W[\ell, i]$, $D, D^{-1}, e$ & $\Ntables \cdot \Ntrain$ & W \\
$\mathsf{wt\_num}, t_{\mathrm{match}}$ & $\Ntables \cdot \Ntrain$ & A \\
$y_{\mathrm{ind},c}$ & $\Ntrain$ & Y ($\Nclass$ oracles) \\
$y_{\mathrm{test\_ind},c}$ & $\Ntest$ & Y ($\Nclass$ oracles) \\
$\signbit[\ell, k, \cdot]$ & $\Ntables \cdot \Kdepth \cdot (\Ntrain{+}\Ntest)$ & P \\
$\mathsf{abs\_dp}[\ell, k, \cdot]$ & $\Ntables \cdot \Kdepth \cdot (\Ntrain{+}\Ntest)$ & P \\
$\xraw$ (conditional) & $\Ntrain \cdot \draw$ & PCA \\
$\Vpca$ (conditional) & $\draw \cdot \dpca$ & PCA \\
\bottomrule
\end{tabular}
\end{table}

The non-obvious commitment choices follow. The Shapley-weight family $W, D = M(M-1), D^{-1}, e$ records the per-table contribution, denominator, denominator inverse, and edge-case flag ($e = 1 \Leftrightarrow \mhat \leq 1$); committing the advice oracles $\mathsf{wt\_num}, t_{\mathrm{match}}$ (derivable from the rest) lets the weight-identity sumcheck stay low-degree. The label indicators $y_{\mathrm{ind},c}, y_{\mathrm{test\_ind},c}$ one-hot-encode class membership into $2\Nclass$ oracles, the factor that drives the super-oracle batching of \S\ref{sec:optimizations}. The hashing preamble commits $\signbit$ and $\mathsf{abs\_dp} = |\langle r_\ell, x\rangle|$ so the verifier can reconstruct each signed dot product at the final challenge, and the conditional PCA oracles $\xraw, \Vpca$ are committed only when $\draw$ exceeds the hashing budget.

\begin{algorithm}[t]
\small
\caption{\zkls{} $\mathsf{Prove}$ / $\Verify$ (sketch)}
\label{alg:lshbucket}
\begin{algorithmic}[1]
\State \textbf{Input (Prover):} $\{x_{i}^{\mathrm{tr}}, y_i^{\mathrm{tr}}\}_{i \in [\Ntrain]}$, $\{x_{t}^{\mathrm{te}}, y_t^{\mathrm{te}}\}_{t \in [\Ntest]}$, hash tables $\{\hashkey_\ell\}_{\ell \in [\Ntables]}$.
\State \textbf{Input (Verifier):} public parameters $\pp$, commitments $\{\commitment{i}\}_{i \in [m]}$, $\commitment{\mathrm{val}}$.
\Statex
\State \textbf{Phase 1 --- Commit.}
\State \hskip1em Prover hashes data and builds the bucket-,
  lookup-, weight-, advice-, label-, and preamble-family oracles
  of Table~\ref{tab:oracle-inventory} (with conditional PCA
  oracles when $\draw > \mathrm{budget}$).
\State \hskip1em Prover publishes $\cm{\cdot}$ for each oracle above and broadcasts $\svsum, \svavg$.
\Statex
\State \textbf{Phase 2 --- Sumcheck.}
\State \hskip1em Fiat--Shamir-derive $(\beta[0..9], \gamma[0..2], r_c, \rho, \rho_1)$ from the transcript.
\State \hskip1em Run $\mathsf{RLC}$-batched sumcheck over
  $(\Ntables, \Ntrain)$ folding the ten Layer-1/2/3 modules of
  Table~\ref{tab:module-spec} (LN batch). The per-class
  validation-histogram sumcheck is folded in via $r_c$, with the
  same $r_c$ reused for both the per-class training- and
  validation-histogram class-folding and the batched run.
\State \hskip1em Run $\mathsf{RLC}$-batched sumcheck over $(\Ntables, \Nbuckets)$: bucket-size lookup auxiliary.
\State \hskip1em If high-$\draw$: run training-side and test-side PCA-projection sumchecks plus the orthonormality sumcheck. Always-on: dot-product sign sumchecks (training and test) over $\log d$.
\State \hskip1em Run Op1 input-side reconstruction sumcheck (\S\ref{sec:method:bucket:sumchecks}, Layer~4).
\State \hskip1em Run Op7 output-side reconstruction sumcheck (\S\ref{sec:method:bucket:sumchecks}, Layer~4).
\Statex
\State \textbf{Phase 3 --- Opening.}
\State \hskip1em At each final challenge point the verifier
  requests, in the pre-optimization baseline:
  (a)~$2\Kdepth + 2$ sign-bit openings ($22$ at $\Kdepth = 10$;
  \S\ref{sec:optimizations} reduces this to $O(1)$);
  (b)~$(2\Nclass + 2)$ per-class histogram openings;
  (c)~$\Nclass$ per-class lookup openings. Each opening invokes
  one $\PCS$ opening proof.
\State \hskip1em For each provider $i$: PCS openings of $\MLExtrain$ at $\tau_1$ and $\cm{\subD{i}}$ at $\tau_{1,\mathrm{low}}$ (Op1); $\svavg$ at $\tau$ and $\cm{\subphi{i}}$ at $\ptlow$ (Op7).
\State \textbf{$\Verify$} checks all per-round polynomial identities, PCS opening proofs, Op1/Op7 final-point equalities, and accepts iff all checks pass.
\end{algorithmic}
\end{algorithm}

\paragraph{Full per-module protocol.} Algorithm~\ref{alg:lsh-bucket-full} below is the detailed counterpart of the high-level Algorithm~\ref{alg:lsh-bucket} in \S\ref{sec:method:bucket:sumchecks}. Each step in the main-body algorithm expands into the per-module commitments and sumchecks listed here.

\begin{algorithm}[t]
\small
\caption{Full \zkls{} protocol with per-module sumchecks (detailed version of Algorithm~\ref{alg:lsh-bucket}).}
\label{alg:lsh-bucket-full}
\begin{algorithmic}[1]
\Statex \textbf{Public input:} public parameters $\pp$, projection vectors $\{r_\ell^{(k)}\}$, per-provider data commitments $\{\cm{\subD{i}}\}$, validation-set commitment $\cm{\subD{\mathrm{val}}}$, hash parameters $(\Ntables, \Kdepth)$.
\Statex \textbf{Witness:} per-provider data shards $\{\subD{i}\}$, validation set $\subD{\mathrm{val}}$.
\Statex \textbf{Auxiliary input:} bucket histograms $\cnt, \cnttr, \cntte$; per-sample lookups $\mhat, \mchat, \tchat$; weight oracles $W, e, D_{\mathrm{inv}}, \mathsf{wt\_num}$; label indicators $y_{\mathrm{ind}}, y_{\mathrm{test\_ind}}$; aggregates $\svsum, \svavg$; conditional PCA witnesses $\xraw, \Vpca$ and remainders.
\Statex \textbf{Output:} proof $\pi$ certifying $\svavg = \mathsf{LSH\text{-}Shapley}(\subD{}, \subD{\mathrm{val}})$.
\Statex
\State Prover commits the feature MLEs $\MLExtrain, \MLExtest$ (and $\xraw, \Vpca$, remainders if PCA enabled).
\State Prover commits hashing-module oracles $\signbit, \mathsf{abs\_dp}, \dotprod$ (and ZK-PCA remainder oracles if PCA enabled).
\State Prover and verifier run the hashing-module sumchecks (sign-boolean, sign-range lookup, dot-product binding for both training and validation), and the ZK-PCA sumchecks if PCA is enabled.
\State Prover commits the bucket-histogram oracles $\cnt, \cnttr, \cntte$, the per-sample lookups $\mhat, \mchat, \tchat$, and the per-class label indicators $y_{\mathrm{ind}}, y_{\mathrm{test\_ind}}$.
\State Prover and verifier run the histogram-consistency sumchecks.
\State Prover and verifier run the lookup-consistency sumchecks.
\State Prover commits the weight oracles $W, e, D_{\mathrm{inv}}, \mathsf{wt\_num}$ and the aggregates $\svsum, \svavg$.
\State Prover and verifier run the weight-identity sumchecks (edge case, interior case, edge-flag boolean, denominator-inverse structural), the table-aggregation sumcheck, and the normalization sumcheck.
\State Prover and verifier run the input-side reconstruction sumcheck binding $\MLExtrain$ to $\{\cm{\subD{i}}\}$, and the output-side reconstruction sumcheck binding $\cm{\svavg}$ to $\{\cm{\subphi{i}}\}$.
\State Prover sends PCS openings at all sumcheck challenge points; verifier replays the round-by-round identities and the closing equality at each challenge.
\State Each provider runs three local checks (input binding, output binding, transcript correctness) on the per-provider slice they verify.
\Statex
\State \emph{In the deployed implementation, the modules sharing a sum domain collapse into five batched sumcheck instances via random linear combination; see~\S\ref{sec:opt:super-oracle} and Appendix~\ref{app:module-table}.}
\end{algorithmic}
\end{algorithm}

\subsection{\zkls{} Per-Module Specification Table}
\label{app:module-table}

Table~\ref{tab:module-spec} catalogs the logical sumcheck modules of \zkls{} (\S\ref{sec:method:bucket:sumchecks}). For each module we list its verification layer, target-sum identity (canonical forms drawn from \S\ref{sec:method:bucket:witnesses} and the closed-form derivation of Appendix~\ref{app:closed-form}), sum domain, maximal round-message degree, and the RLC batch the module is folded into in the deployed implementation (\S\ref{sec:optimizations}). The deployed implementation collapses logical modules sharing a sum domain into five RLC-batched sumcheck instances: a batched-LN sumcheck over $(\Ntables, \Ntrain)$ folding ten same-domain identities under coefficients $\beta[0..9]$, a batched-LB sumcheck over $(\Ntables, \Nbuckets)$ folding three squared-count auxiliaries under $\gamma[0..2]$, plus three standalone sumchecks (V3, S-agg, S1).

\begin{table*}[t]
\scriptsize
\caption{Per-module specification for \zkls{}. ``RLC batch'' is the batch each module is folded into in the deployed implementation: \textsc{LN} = batched-$(\Ntables, \Ntrain)$ under $\beta[0..9]$, \textsc{LB} = batched-$(\Ntables, \Nbuckets)$ under $\gamma[0..2]$, \textsc{Std} = standalone. Identities use the canonical forms of \S\ref{sec:method:bucket:witnesses}; $\bar{w}$ denotes the denominator-cleared weight numerator (Appendix~\ref{app:closed-form}).}
\label{tab:module-spec}
\centering
\begin{tabular}{@{}p{4.2cm}lp{6.5cm}p{2.2cm}lc@{}}
\toprule
\textbf{Module (descriptive name)} & \textbf{Layer} & \textbf{Target-sum identity} & \textbf{Sum domain} & \textbf{Round-deg.} & \textbf{RLC batch} \\
\midrule
V1 (training-histogram consistency) & 1 & $\cnt[\ell, b] = \sum_h \prod_k \bigl(b_k s_k(\ell, h) + (1{-}b_k)(1{-}s_k(\ell, h))\bigr)$ & $(\Ntables, \Ntrain)$ & $\Kdepth + 1$ & \textsc{LN} \\
V2 (per-class training-histogram consistency) & 1 & $\cnttr[\ell, b, c] = \sum_h \mathbf{1}[y_h = c] \cdot \mathrm{eq}_\Kdepth(b, \hashkey_\ell(x_h))$ & $(\Ntables, \Ntrain)$ & $\Kdepth + 2$ & \textsc{LN} \\
V3 (per-class validation-histogram consistency) & 1 & $\cntte[\ell, b, c] = \sum_t \mathbf{1}[y_t^{\mathrm{te}} = c] \cdot \mathrm{eq}_\Kdepth(b, \hashkey_\ell(x_t^{\mathrm{te}}))$ & $(\Ntables, \Ntest)$ & $\Kdepth + 2$ & \textsc{Std} \\
\midrule
V4-main (bucket-size lookup, main) & 2 & $\sum_{(\ell, h)} \mathrm{eq}_\ell \cdot \prod_k g_k \cdot \mhat = \langle \mathrm{eq}, \cnt \rangle$ ($\mathrm{eq}_\Kdepth$ identity) & $(\Ntables, \Ntrain)$ & $\Kdepth + 2$ & \textsc{LN} \\
V4-aux (bucket-size lookup, squared-count auxiliary) & 2 & $C_4 = \sum_{(\ell, b)} \cnt^{\,2} \cdot \mathrm{eq}$ (squared-count) & $(\Ntables, \Nbuckets)$ & $3$ & \textsc{LB} \\
V5-main (same-class lookup, main) & 2 & $\mathrm{eq}_\Kdepth$ identity binding $\mchat \leftrightarrow \cnttr$ & $(\Ntables, \Ntrain)$ & $\Kdepth + 2$ & \textsc{LN} \\
V5-aux (same-class lookup, squared-count auxiliary) & 2 & Squared-count claim on $\cnttr \cdot \mathsf{cc\_weighted}$ & $(\Ntables, \Nbuckets)$ & $3$ & \textsc{LB} \\
V6-main (validation-class lookup, main) & 2 & $\mathrm{eq}_\Kdepth$ identity binding $\tchat \leftrightarrow \cntte$ & $(\Ntables, \Ntrain)$ & $\Kdepth + 2$ & \textsc{LN} \\
V6-aux (validation-class lookup, squared-count auxiliary) & 2 & Squared-count claim on $\cntte \cdot \mathsf{cc\_weighted}$ & $(\Ntables, \Nbuckets)$ & $3$ & \textsc{LB} \\
\midrule
S-new-2-edge (edge-case weight identity) & 3 & $\sum_{(\ell, h)} e \cdot [\,W - \mhat \cdot \mathsf{t\_match}\,] = 0$ on $e = 1$ & $(\Ntables, \Ntrain)$ & $4$ & \textsc{LN} \\
S-new-2-nonedge (interior weight identity) & 3 & $\sum_{(\ell, h)} (1 - e) \cdot [\,W \cdot D - \mathsf{wt\_num}\,] = 0$ on $M \geq 2$ & $(\Ntables, \Ntrain)$ & $4$ & \textsc{LN} \\
E-bool (edge-flag boolean check) & 3 & $e \cdot (1 - e) = 0$ (Boolean flag) & $(\Ntables, \Ntrain)$ & $2$ & \textsc{LN} \\
S-weight-1 (denominator-inverse structural check, non-edge) & 3 & $(1 - e) \cdot D \cdot D_{\mathrm{inv}} = (1 - e)$ (denominator inverse on non-edge) & $(\Ntables, \Ntrain)$ & $3$ & \textsc{LN} \\
S-weight-2 (denominator-inverse structural check, edge) & 3 & $e \cdot D = 0$ (structural; $D = 0$ on edge) & $(\Ntables, \Ntrain)$ & $2$ & \textsc{LN} \\
S-agg (table-aggregation sumcheck) & 3 & $\svsum[h] = \sum_{\ell} W[\ell, h]$ & $\Ntables$ & $1$ & \textsc{Std} \\
S1 (normalization sumcheck) & 3 & $\svavg \cdot (\Ntables \cdot \Ntest) = \svsum$ (normalization, inherited) & $\Ntrain$ & $1$ & \textsc{Std} \\
\bottomrule
\end{tabular}
\end{table*}

\subsection{LSH-Shapley Closed Form: Derivation, Axioms, and Bounds}
\label{app:closed-form}

This subsection derives the four-case closed form
\eqref{eq:phi-pm}, verifies the four Shapley fairness axioms,
and records the sign and monotonicity properties cited in
\S\ref{sec:method:plaintext:spec}.

\parh{Setup recap.} Fix a validation point
$(x_t^{\mathrm{te}}, y_t)$ and a hash table $\ell$. Write the
bucket as
$\mathcal{B}_\ell^t = \{j \in [\Ntrain] :
  B_\ell(x_j^{\mathrm{tr}}) = B_\ell(x_t^{\mathrm{te}})\}$ of
size $M$, the agreeing subset $\mathcal{B}_\ell^{t,+}$ of size
$M^{+}$, and the disagreeing subset of size
$M^{-} = M - M^{+}$. The per-table bucket-vote utility (defined
in \S\ref{sec:method:plaintext:motivation}) is
$u_\ell^t(S) = |S \cap \mathcal{B}_\ell^{t,+}| / |S \cap \mathcal{B}_\ell^t|$,
with $u_\ell^t(S) := 0$ when $S \cap \mathcal{B}_\ell^t = \emptyset$.
Throughout we write $\harm{m} = \sum_{n=1}^{m} 1/n$ ($\harm{0} = 0$).

\parh{Dummy collapse (Lemma B.1).} For any
$j \notin \mathcal{B}_\ell^t$ and any $S$,
$(S \cup \{j\}) \cap \mathcal{B}_\ell^t = S \cap \mathcal{B}_\ell^t$,
so $u_\ell^t(S \cup \{j\}) = u_\ell^t(S)$. Non-colliders are
therefore Shapley dummies and $\phi_j = 0$, and the Shapley
value of any $i \in \mathcal{B}_\ell^t$ in the $\Ntrain$-player
game equals its value in the $M$-player sub-game over
$\mathcal{B}_\ell^t$.

\parh{Marginal contributions (Lemma B.2).} For
$T \subseteq \mathcal{B}_\ell^t \setminus \{i\}$ with $|T| = k$
and $|T \cap \mathcal{B}_\ell^{t,+}| = m^{+}$, direct algebra
gives, for $k \geq 1$,
\[
\Delta_i^{+}(T) := u_\ell^t(T \cup \{i\}) - u_\ell^t(T)
  \;=\; \frac{k - m^{+}}{k(k+1)}, \qquad
\Delta_i^{-}(T) \;=\; -\frac{m^{+}}{k(k+1)},
\]
where $+$ / $-$ denote $y_i = y_t$ / $y_i \neq y_t$. For
$k = 0$: $\Delta_i^{+}(T) = 1$ and $\Delta_i^{-}(T) = 0$.

\parh{Closed form (Theorem B.1).} For $i \in \mathcal{B}_\ell^t$
and $M \geq 2$,
\[
\phipos = \frac{\harm{M}}{M}
  - \frac{(M^{+}-1)(\harm{M}-1)}{M(M-1)}, \qquad
\phineg = -\frac{M^{+}(\harm{M}-1)}{M(M-1)}.
\]
For $M = 1$, $\phi_i = \mathbf{1}[y_i = y_t]$. For
$i \notin \mathcal{B}_\ell^t$, $\phi_i = 0$.

\emph{Proof (correct-label case, $M \geq 2$, $y_i = y_t$).}
By Lemma B.1 we work in the $M$-player sub-game. Decomposing the
Shapley sum over $k = |T|$ and $m^{+} = |T \cap \mathcal{B}_\ell^{t,+}|$,
\[
\phipos = \frac{1}{M} +
  \sum_{k=1}^{M-1} \frac{k!\,(M{-}k{-}1)!}{M!}
  \sum_{m^{+}}
  \binom{M^{+}{-}1}{m^{+}}\binom{M^{-}}{k{-}m^{+}}
  \frac{k - m^{+}}{k(k+1)},
\]
where the leading $1/M$ collects the $k = 0$ term and $m^{+}$
ranges over $0, \ldots, \min(k, M^{+} - 1)$. The Vandermonde
identity and its weighted variant give
\begin{align*}
\sum_{m^{+}}\!\binom{M^{+}{-}1}{m^{+}}\!\binom{M^{-}}{k{-}m^{+}}
  &= \binom{M{-}1}{k}, \\
\sum_{m^{+}}\! m^{+}\!\binom{M^{+}{-}1}{m^{+}}\!\binom{M^{-}}{k{-}m^{+}}
  &= (M^{+}{-}1)\binom{M{-}2}{k{-}1},
\end{align*}
which collapse the inner sum to
$k\binom{M-1}{k} - (M^{+}-1)\binom{M-2}{k-1}$. Substituting and
using $\sum_{k=1}^{M-1} 1/(k+1) = \harm{M} - 1$,
\[
\phipos = \frac{1}{M} + \frac{\harm{M} - 1}{M}
  - \frac{(M^{+}-1)(\harm{M} - 1)}{M(M-1)}
  = \frac{\harm{M}}{M}
  - \frac{(M^{+}-1)(\harm{M}-1)}{M(M-1)}.
\]
The incorrect-label case is analogous, using
$\sum_{m^{+}} m^{+} \binom{M^{+}}{m^{+}}\binom{M^{-}-1}{k-m^{+}}
  = M^{+}\binom{M-2}{k-1}$, and yields $\phineg$. The $M = 1$
case follows since the only coalition is empty
($k = 0$), giving $\phi_i = \Delta_i(T = \emptyset) / 1$.
\hfill$\square$

\parh{Shapley axioms (Theorem B.2).} The closed form satisfies
all four fairness axioms against $u_\ell^t$.
\begin{itemize}[nosep,leftmargin=1.3em]
\item \emph{Efficiency.}
$\sum_{i \in [\Ntrain]} \phi_i = M^{+}/M = u_\ell^t([\Ntrain])$.
For $M \geq 2$, expanding
$M^{+} \cdot \phipos + M^{-} \cdot \phineg$ and using
$M^{+} + M^{-} = M$ telescopes the harmonic terms to $M^{+}/M$;
non-colliders contribute zero by Lemma B.1.
\item \emph{Symmetry.} $\phipos$ and $\phineg$ depend on $i$
only through $(M, M^{+}, \mathbf{1}[y_i = y_t])$, so any two
bucket-mates sharing a label receive equal value.
\item \emph{Dummy player.} Established by Lemma B.1.
\item \emph{Linearity.} Inherited from the linearity of Shapley
values in the utility function. This is what licenses the
multi-table mean of \S\ref{sec:method:plaintext:motivation}: the
Shapley value of $\bar{u} = (\Ntables \Ntest)^{-1}
\sum_{\ell, t} u_\ell^t$ equals the corresponding mean of the
per-table per-validation Shapley values.
\end{itemize}

\parh{Sign and monotonicity (Theorem B.3).} For $M \geq 2$, the
algebraic rewrite
$\phipos = 1/M + (M - M^{+})(\harm{M} - 1)/(M(M-1))$ gives:
\begin{itemize}[nosep,leftmargin=1.3em]
\item $\phipos \geq 1/M > 0$ always; equality iff $M^{+} = M$.
\item $\phineg \leq 0$ always; equality iff $M^{+} = 0$.
\item $\phipos - \phineg = 1/M + (\harm{M} - 1)/(M-1) > 0$, so
the closed form discriminates correct-label from
incorrect-label colliders.
\item Both weights are non-increasing in $M^{+}$:
$\partial \phipos / \partial M^{+}
  = \partial \phineg / \partial M^{+}
  = -(\harm{M} - 1)/(M(M-1)) \leq 0$.
\item Asymptotically, fixing $p = M^{+}/M$, both weights are
$O(\harm{M}/M) = O(\log M / M)$ as $M \to \infty$, capturing the
dilution of per-point value in large buckets.
\end{itemize}

\subsection{Theorem 1: Rank Concentration (Proof Sketch)}
\label{app:thm1-proof}

\parh{Theorem 1 (rank-concentration).} Fix a validation point $x^{\mathrm{te}}$ and a training point $x_{i}^{\mathrm{tr}}$. Let $\rho(x_{i}^{\mathrm{tr}})$ denote the true KNN-Shapley rank of $x_{i}^{\mathrm{tr}}$ against $x^{\mathrm{te}}$, and $\hat\rho(x_{i}^{\mathrm{tr}}; \Ntables)$ the LSH-Shapley rank estimated from $\Ntables$ independent tables. For any $\varepsilon, \delta > 0$ there exists $\Ntables^{\ast} = \Ntables^{\ast}(\varepsilon, \delta, \Kdepth, d)$ such that for $\Ntables \geq \Ntables^{\ast}$,
\[
\Pr\bigl[\,|\hat\rho(x_{i}^{\mathrm{tr}}; \Ntables) - \rho(x_{i}^{\mathrm{tr}})| > \varepsilon \cdot \Ntrain\,\bigr] \;\leq\; \delta.
\]

\parh{Proof sketch.} The proof is a Hoeffding-style concentration argument over the $\Ntables$ independent per-table estimators. Each per-table bucket-collision count contributes a bounded-variance, independent random variable under the hash-family randomness; averaging over $\Ntables$ tables yields sub-Gaussian tails with rate $\Ntables^{-1}$. A union bound across training points and a standard rank-stability argument (the rank function is $1$-Lipschitz in sample-wise Shapley-value perturbations up to ties) converts the per-point concentration into the stated rank-difference bound. The explicit constant tracks $\Kdepth$ and $d$ through the hash-family bias; see below for the variance-saturation refinement that replaces the Hoeffding bound with an $O(1)$-in-$\Ntrain$ estimate.

\subsection{Parameter Selection: Variance-Saturation Derivation}
\label{app:param-sel}

\parh{Hoeffding baseline for $\Ntables$.} A direct Hoeffding bound on the per-table estimator yields $\Ntables = O(\log(\Ntrain/\delta) / \varepsilon^{2})$ tables to achieve $\varepsilon$-accurate Shapley values with probability at least $1 - \delta$. This is the bound underlying Theorem~1.

\parh{Variance-saturation refinement.} The Hoeffding bound is loose because the per-table estimator variance saturates once $\Ntables$ exceeds the effective support of the bucket-collision distribution. Tracking the variance of $\phi^{\pm}$ directly --- using the bounded-variance structure of the closed-form weights --- yields a refined design rule $\Ntables^{\ast} = O(1)$ under fixed $(\varepsilon, \delta, \Kdepth)$; empirically $\Ntables = 16$ suffices across all benchmark datasets. The constant-in-$\Ntrain$ scaling is what makes LSH-Shapley ZK-feasible at marketplace scale: the prover's Layer-1 work scales linearly in $\Ntables$, so a $\log \Ntrain$ factor in $\Ntables$ would reintroduce a $\Ntrain \log \Ntrain$ term that the histogram reformulation was designed to remove.

\parh{Choice of $\Kdepth$.} $\Kdepth$ controls bucket granularity: larger $\Kdepth$ sharpens label discrimination (fewer spurious collisions between label classes), but the number of buckets $\Nbuckets = 2^{\Kdepth}$ drives Layer-2 lookup work (\S\ref{sec:method:bucket:complexity}) and drives the sign-bit opening count in Phase 3. We pick the largest $\Kdepth$ the ZK budget tolerates subject to a quality lower bound (rank-correlation against exact KNN-Shapley on a held-out validation slice); our reference $(\Ntables, \Kdepth) = (16, 10)$ configuration balances these constraints across the 12-dataset benchmark suite.

\subsection{Super-Oracle Batching: Soundness Proof}
\label{app:super-oracle-soundness}

This subsection gives the full proof of the super-oracle batching
soundness claim deferred from \S\ref{sec:opt:super-oracle}.

\parh{Setup recap.} Let $\{f_j\}_{j=0}^{Q-1}$ be a family of
multilinear sub-oracles on $\Bool^{n}$ obtained by fixing one
index axis of a parent tensor, with $Q$ padded to a power of two.
The prover commits the super-oracle
$F : \Bool^{\log Q + n} \to \EFld$ defined by $F(j, x) := f_j(x)$
and extended multilinearly. At a common upstream evaluation point
$x^{\star} \in \EFld^{n}$ already fixed by the parent sumcheck,
the verifier samples a fresh challenge
$r \in \EFld^{\log Q}$, requests the single PCS opening
$F(r, x^{\star})$, and accepts the prover-supplied values
$v_0, \ldots, v_{Q-1}$ iff~\eqref{eq:super-oracle-rlc} holds.

\parh{Claim.} For any prover $\mathcal{P}^{\star}$ and any choice
of values $\{v_j\}$, the probability over $r$ that
$\mathcal{P}^{\star}$ passes the equality check
$F(r, x^{\star}) = \sum_{j} \eq(r, j) \cdot v_j$ while having some
$v_{j^{\star}} \neq f_{j^{\star}}(x^{\star})$ is at most
$\log Q / |\EFld|$.

\parh{Proof.} Define the residual polynomial
\[
g(r) \;:=\; F(r, x^{\star}) \;-\;
  \sum_{j \in \Bool^{\log Q}} \eq(r, j) \cdot v_j.
\]
Both terms are multilinear in $r$ of total degree at most
$\log Q$. By multilinear interpolation,
$F(r, x^{\star}) = \sum_{j} \eq(r, j) \cdot f_j(x^{\star})$ for
every $r$, so
$g(r) = \sum_{j} \eq(r, j) \cdot (f_j(x^{\star}) - v_j)$.
If $v_{j^{\star}} \neq f_{j^{\star}}(x^{\star})$ for some
$j^{\star}$, then evaluating at $r = j^{\star}$ gives
$g(j^{\star}) = f_{j^{\star}}(x^{\star}) - v_{j^{\star}} \neq 0$,
so $g \not\equiv 0$. By the Schwartz--Zippel lemma over $\EFld$,
$\Pr_{r \gets \EFld^{\log Q}}[g(r) = 0] \leq \log Q / |\EFld|$.

\parh{Composition with the parent sumcheck.} The verifier plugs
the recovered $\{v_j\}$ into the parent sumcheck's closing
identity exactly as in the unbatched protocol; under the
super-oracle binding above, the parent sumcheck inherits its
pre-existing soundness error. Each of the three families
($\signbit$, $\cnttr$, $\{\mchat, \tchat\}$) is packed into a
separate super-oracle, and their soundness errors add by a union
bound. With $\log Q \leq \log(2\Nclass + 2) + 1 \le 5$ and
$|\EFld| = \goldilocks \approx 2^{64}$, the per-family
contribution is below $2^{-59}$, dominated by the parent
sumcheck and PCS-binding terms.

\parh{Endpoint correctness.} The single PCS opening
$F(r, x^{\star})$ binds the prover to the unique multilinear
extension $F$ that agrees with the committed sub-oracle slices on
$\Bool^{\log Q + n}$. Schwartz--Zippel binding therefore reaches
all the way to the underlying $\{f_j\}$ rather than only to a
prover-chosen extension, completing the chain from the PCS-binding
assumption to per-slice equality
$v_j = f_j(x^{\star})$. \hfill$\square$

\subsection{Op7 Selective-Disclosure: Full Reconstruction Sumcheck}
\label{app:op7-full}

\parh{Op1 input-side mirror.} Op1 is structurally identical to Op7 but anchors the input end of the binding chain via the training-data oracle $\MLExtrain$ over the augmented $(\log \Ntrain + \log d)$ cube, using per-provider oracle $\subD{i}$. The identity proved is
\[
\MLE{x_{\mathrm{tr}}}(h_{\mathrm{tr}}, h_d) \;=\; \sum_{i=1}^{m} \eq(h_{\mathrm{tr,high}}, \mathrm{bin}(i)) \cdot \subD{i}(h_{\mathrm{tr,low}}, h_d),
\]
reduced by a Fiat--Shamir sumcheck to one $\PCS$ opening of $\MLExtrain$ at $\tau_1$ plus $m$ openings of $\subD{i}$ at $\tau_{1,\mathrm{low}}$. Together Op1 $+$ middle sumchecks (\S\ref{sec:method:bucket:sumchecks}) $+$ Op7 form the end-to-end binding chain $D_i \to \cm{\subD{i}} \to \cm{\MLExtrain} \to \cdots \to \cm{\svavg} \to \cm{\subphi{i}} \to \svavg_{|_{\indexset{i}}}$; falsifying any link forces a $\PCS$ forgery, infeasible under PCS binding.

\parh{Three-check provider verification.} Provider $\provider_i$ executes three local checks that together close the five-attack surface (data substitution, misplacement, intermediate-computation corruption, fabricated values, cross-provider attribution).
\begin{itemize}[nosep]
  \item \textbf{Check A --- input binding.} Recompute $\cm{\subD{i}}$ locally from the raw provider data (one Brakedown encoding $+$ Merkle hashing over $\Ntrain/m \cdot d$ elements) and match the value published in the proof transcript.
  \item \textbf{Check B --- output binding.} From the $\Ntrain/m$ received Shapley values, reconstruct the MLE $\subphi{i}$ over $\log(\Ntrain/m)$ variables, recompute $\cm{\subphi{i}}$, and match the proof.
  \item \textbf{Check C --- computation correctness.} Verify the full transcript: Op1, all sumchecks of \S\ref{sec:method:bucket:sumchecks}, Op7.
\end{itemize}

\parh{Binding-chain detail.} A malicious prover who substitutes provider $i$'s data fails Check A (the committed $\cm{\subD{i}}$ does not match); a prover who tampers with intermediate computation fails Check C (a middle-sumcheck identity rejects); a prover who fabricates scores fails Check B (the reconstructed $\cm{\subphi{i}}$ does not match the Op7 output-endpoint commitment). Cross-provider attribution attacks --- swapping scores across providers --- fail Check B at the defrauded provider and fail Check A at any provider whose slice was moved. The soundness argument composes: under the PCS-binding assumption, the probability that all three checks accept on a forged witness is bounded by the union of the per-module soundness errors plus the PCS-binding error, which is $\leq 2\log \Ntrain / |\EFld|$ plus a negligible PCS term.

\parh{Per-provider opening cost.} Op1 $+$ Op7 together add $2m$ commits (one $\cm{\subD{i}}$ and one $\cm{\subphi{i}}$ per provider), two degree-$2$ sumchecks of length $\log \Ntrain$, and $2(m+1)$ $\PCS$ openings: two endpoint openings $\cm{\MLExtrain}$ at $\tau_1$ and $\cm{\svavg}$ at $\tau$ (amortized across all providers), plus $m$ provider-side openings each of $\cm{\subD{i}}$ at $\tau_{1,\mathrm{low}}$ and $\cm{\subphi{i}}$ at $\ptlow$. The $\PCS$ openings are the dominant per-provider cost; \S\ref{sec:evaluation} measures the end-to-end wall-clock overhead.

\parh{Soundness and privacy.} \emph{Soundness.} A malicious prover committing incorrect sub-polynomials is caught with probability $\geq 1 - 2\log \Ntrain / |\EFld|$ since both endpoints are bound to $\PCS$ openings. \emph{Privacy.} Each provider's $\PCS$ opening reveals a single random linear combination of the provider's $\Ntrain/m$ values (negligible for $\Ntrain/m \gg 1$). The sumcheck is not formally zero-knowledge; masking extensions achieve formal ZK at one extra commitment (Future Work, \S\ref{sec:future_work}).

\subsection{\zkls{} Soundness Chain and Per-Sumcheck Cost Table}
\label{app:soundness-chain}

This appendix expands the soundness sketch and complexity summary of \S\ref{sec:method:bucket:soundness} with a per-link reduction and a per-sumcheck cost table.

\parh{Per-link reduction.} If the training-histogram sumcheck accepts, $\cnt$ agrees with the committed per-sample bucket assignments under $B_\ell$; the per-class training-histogram and validation-histogram sumchecks anchor $\cnttr$ and $\cntte$ respectively, folding the class dimension via a single random $r_c$ challenge. The bucket-size lookup sumcheck then forces $\mhat$ to match $\cnt$ at every sample's bucket, the same-class lookup sumcheck does the same for $\mchat \leftrightarrow \cnttr$, and the validation-class lookup sumcheck for $\tchat \leftrightarrow \cntte$. The match-count advice sumcheck binds $t_{\mathrm{match}}$ to $\tchat$. With $(\mhat, \mchat, \tchat, t_{\mathrm{match}}, \harm{M})$ thus anchored, the edge-case and interior weight sumchecks together force $W$ to equal the closed-form weight identity $W = \phi^{\pm}(\mhat, \mchat, \harm{\mhat})$ on the $M \in \{0,1\}$ and $M \geq 2$ regions respectively; the table-aggregation sumcheck aggregates $W$ across the $\Ntables$ tables into $\svsum$, and normalization by $\Ntables \cdot \Ntest$ yields $\svavg$. The harmonic oracle $\harm{M}$ is checked compositionally: a forged $\harm{M}$ produces a wrong $W$ via the weight sumchecks, hence a wrong $\svavg$ via the table-aggregation sumcheck, which fails Op7's reconstruction Check~B (each provider's Check~B locally recomputes $\cm{\subphi{i}}$ from the values they receive, so a mismatch is detected without a dedicated module for $\harm{M}$).

\parh{Per-sumcheck cost table.} Table~\ref{tab:sumcheck-costs} lists each of the eleven bucket-specific sumchecks and the ancillary sumchecks together with their domain, degree, and asymptotic prover work. The total soundness error sits below $2^{-50}$ at typical parameters because $\sum_{\text{modules}} \mathrm{deg} \cdot \log(\mathrm{domain}) / |\F|$ remains negligible against the Goldilocks field size $|\F| = \goldilocks \approx 2^{64}$, with PCS binding contributing a similarly negligible additive term.

\begin{table}[t]
\centering
\scriptsize
\caption{Per-sumcheck cost table for \zkls{}. Domain shows the sumcheck variables; degree is the maximal round polynomial degree; prover work is the asymptotic per-module cost.}
\label{tab:sumcheck-costs}
\setlength{\tabcolsep}{4pt}
\begin{tabular}{@{}lccc@{}}
\toprule
Sumcheck & Domain & Degree & Prover work \\
\midrule
Training-histogram & $\log\Ntables + \log\Ntrain$ & $\Kdepth + 1$ & $O(\Ntables \cdot \Ntrain \cdot \Kdepth^{2})$ \\
Per-class training-histogram & $\log\Ntables + \log\Ntrain$ & $\Kdepth + 2$ & $O(\Ntables \cdot \Ntrain \cdot \Kdepth^{2})$ \\
Per-class validation-histogram & $\log\Ntables + \log\Ntest$ & $\Kdepth + 2$ & $O(\Ntables \cdot \Ntest \cdot \Kdepth^{2})$ \\
Bucket-size lookup & $\log\Ntables + \log\Ntrain$ & $\Kdepth + 2$ & $O(\Ntables \cdot \Ntrain)$ \\
Same-class lookup & $\log\Ntables + \log\Ntrain$ & $\Kdepth + 2$ & $O(\Ntables \cdot \Ntrain)$ \\
Validation-class lookup & $\log\Ntables + \log\Ntrain$ & $\Kdepth + 2$ & $O(\Ntables \cdot \Ntrain)$ \\
Squared-count auxiliary & $\log\Ntables + \log\Nbuckets$ & $3$ & $O(\Ntables \cdot \Nbuckets)$ \\
Match-count advice & $\log\Ntables + \log\Ntrain$ & $\Kdepth + 1$ & $O(\Ntables \cdot \Ntrain)$ \\
Edge-case weight & $\log\Ntables + \log\Ntrain$ & $3$ & $O(\Ntables \cdot \Ntrain)$ \\
Interior weight & $\log\Ntables + \log\Ntrain$ & $3$ & $O(\Ntables \cdot \Ntrain)$ \\
Table-aggregation & $\log\Ntables + \log\Ntrain$ & $2$ & $O(\Ntables \cdot \Ntrain)$ \\
\midrule
Dot-product sign (ancillary) & $\log d$ & $2$ & $O(\Ntables \cdot \Kdepth \cdot (\Ntrain + \Ntest) \cdot d)$ \\
PCA-projection (conditional) & $\log\draw + \log\dpca$ & $2$ & $O(\Ntrain \cdot \draw \cdot \dpca)$ \\
\bottomrule
\end{tabular}
\end{table}

The Layer-1 row dominates prover work; Layer-2 and Layer-3 rows are sub-dominant by a factor of $\Kdepth$ in the typical regime, and the ancillary rows confine $d$ and $\draw$ to the preamble as claimed in the body.

\section{Sparsity-Aware Sumcheck: Pseudocode and Soundness}
\label{app:sparsity-aware}

This appendix expands the sparsity-aware sumcheck path of
\S\ref{sec:opt:sparsity}: the per-round prover pseudocode, the
support-superset invariant that controls correctness, and the
soundness argument that the sparse path produces round messages
identical to the dense path.

\parh{Notation.} For a multilinear factor
$f : \Bool^{n} \to \Fld$, let
$\supp(f) = \{x \in \Bool^{n} : f(x) \neq 0\}$ be its true support
and let $\widehat{\supp}(f) \supseteq \supp(f)$ denote the prover-tracked
support mask --- a superset that the prover maintains explicitly.
After fixing the first $r$ variables to challenges
$(\rho_1, \ldots, \rho_r) \in \EFld^{r}$, write
$f^{(r)} : \Bool^{n - r} \to \EFld$ for the partially-folded
factor; the corresponding mask
$\widehat{\supp}(f^{(r)}) \subseteq \Bool^{n - r}$ is obtained by
projecting away the $r$ folded coordinates (a fold step preserves
the superset property because folding cannot create new nonzero
cells beyond the projection of the previous mask).

\parh{Support-superset invariant.} The pseudocode of
Algorithm~\ref{alg:sparse-sumcheck} maintains the following
invariant after every round $r$:
\[
\supp\bigl(f^{(r)}\bigr) \;\subseteq\; \widehat{\supp}\bigl(f^{(r)}\bigr)
\quad \text{for every sparse factor } f.
\]
This is the only structural property required for soundness:
cells outside $\widehat{\supp}$ are guaranteed to evaluate to zero, so
skipping them cannot remove any nonzero contribution to the round
message.

\begin{algorithm}[t]
\small
\caption{Sparsity-aware histogram-sumcheck round (one factor; multi-factor case intersects masks).}
\label{alg:sparse-sumcheck}
\begin{algorithmic}[1]
\State \textbf{Input.} Round index $r$; partially folded factor $f^{(r-1)}$; tracked mask $\widehat{\supp}(f^{(r-1)}) \subseteq \Bool^{n - r + 1}$; product factors $\{g^{(r-1)}_t\}_{t}$ that multiply $f$ in the round message; degree bound $D$.
\State \textbf{Output.} Univariate round-$r$ message $h_r \in \EFld[X]$ of degree $\leq D$.
\Statex
\State $h_r \gets 0$
\For{each $z \in \widehat{\supp}(f^{(r-1)})$ with $z_1 = 0$}
  \State \Comment{The first coordinate is the round-$r$ variable.}
  \State Let $z' = (z_2, \ldots, z_{n-r+1})$; for each $b \in \{0, 1\}$ and each evaluation point $X \in \{0, 1, \ldots, D\}$:
  \State \hskip1em $h_r(X) \mathrel{+}= f^{(r-1)}(X, z') \cdot \prod_t g^{(r-1)}_t(X, z')$ if $(X, z') \in \widehat{\supp}(f^{(r-1)})$ or $X \in \EFld \setminus \Bool$.
\EndFor
\State Send $h_r$. Receive challenge $\rho_r \in \EFld$ from the verifier.
\State $f^{(r)}(\cdot) \gets (1 - \rho_r) \cdot f^{(r-1)}(0, \cdot) + \rho_r \cdot f^{(r-1)}(1, \cdot)$
\State $\widehat{\supp}(f^{(r)}) \gets \{z' : (0, z') \in \widehat{\supp}(f^{(r-1)}) \text{ or } (1, z') \in \widehat{\supp}(f^{(r-1)})\}$
\State \Comment{Fold mask one variable ahead of the table.}
\If{$|\widehat{\supp}(f^{(r)})| \;>\; \theta \cdot 2^{n - r}$}
  \State Switch $f$ back to the dense path for remaining rounds. \Comment{Crossover threshold $\theta$ chosen so mask bookkeeping no longer pays for the zeros it skips.}
\EndIf
\State \Return $h_r$, updated $f^{(r)}$, $\widehat{\supp}(f^{(r)})$.
\end{algorithmic}
\end{algorithm}

\parh{Multi-factor masks.} When the round message contains a
product $\prod_t g_t \cdot f$ in which several factors are sparse,
the prover iterates over
$\bigcap_t \widehat{\supp}(g_t^{(r-1)}) \cap \widehat{\supp}(f^{(r-1)})$. The
intersection inherits the superset property factor-by-factor: any
$z$ outside the intersection lies outside at least one
$\widehat{\supp}$, hence outside the corresponding $\supp$, and the
product is zero at $z$.

\parh{Soundness.} The dense-path round-$r$ message is
\[
h_r^{\mathrm{dense}}(X)
  \;=\; \sum_{z \in \Bool^{n - r}} f^{(r-1)}(X, z) \prod_t g^{(r-1)}_t(X, z),
\]
and the sparse-path message restricts the outer sum to
$\widehat{\supp}(f^{(r-1)}) \cap \bigcap_t \widehat{\supp}(g^{(r-1)}_t)$. By the
support-superset invariant, every $z$ excluded from the
intersection has at least one factor evaluating to zero at
$(X, z)$ on $X \in \Bool$; for $X \notin \Bool$ the verifier
treats the round message as a degree-$D$ polynomial that is fully
determined by its values at $D + 1$ Boolean / small-integer
points, so the sparse and dense polynomials agree at every
evaluation point and therefore as polynomials. Hence
$h_r^{\mathrm{sparse}} \equiv h_r^{\mathrm{dense}}$ as univariate
polynomials over $\EFld$, and the sparse path emits the same
transcript as the dense path. The verifier's sumcheck soundness
is therefore inherited verbatim from the unmodified \zkls{}
protocol of \S\ref{sec:method:bucket}, with no additional error
term. \hfill$\square$

\parh{Crossover safety.} The threshold $\theta$ that triggers a
fall-back to the dense path is a prover-side bookkeeping decision:
the verifier observes only the round message $h_r$, which is
identical on both paths, so any choice of $\theta$ preserves
soundness. The threshold is tuned for prover-side performance
only.

\end{document}
\typeout{get arXiv to do 4 passes: Label(s) may have changed. Rerun}